\theoremstyle{definition} \newtheorem{define} {Definition} [section]
\theoremstyle{definition} 
\newtheorem {theorem} {Theorem}
\newtheorem {corollary} {Corollary}[section]
\newtheorem {lemma} {Lemma}
\newcommand{\bk}[1]{\braket{#1|#1}}
\newcommand{\X}{\mathcal{X}_\alpha}
\newcommand{\num}{\#}
\newcommand{\kb}[1]{\mathbf{\left[#1\right]}}
\newcommand{\al}{\mathcal{A}}
\newcommand{\trd}[1]{\left|\left| #1 \right| \right|}
\newcommand{\st}{\text{ } : \text{ }}
\newcommand{\dc}{\sim_\delta}
\newcommand{\Hmin}{H_\infty}
\newcommand{\Hmax}{H_{max}}
\newcommand{\hd}{\Delta_H}
\newcommand{\uniform}{\mathcal{U}}
\newcommand{\leakEC}{\lambda_{EC}}
\begin{document}
\title{Entropic Uncertainty for Biased Measurements}

\author{
\IEEEauthorblockN{Walter O. Krawec}
\IEEEauthorblockA{
\textit{University of Connecticut}\\
Storrs CT, USA\\
walter.krawec@uconn.edu
}
}

\maketitle
\begin{abstract}
Entropic uncertainty relations are powerful tools, especially in quantum cryptography.  They typically bound the amount of uncertainty a third-party adversary may hold on a measurement outcome as a result of the measurement overlap.  However, when the two measurement bases are biased towards one another, standard entropic uncertainty relations do not always provide optimal lower bounds on the entropy.  Here, we derive a new entropic uncertainty relation, for certain quantum states and for instances where the two measurement bases are no longer mutually unbiased.  We evaluate our bound on two different quantum cryptographic protocols, including BB84 with faulty/biased measurement devices, and show that our new bound can produce higher key-rates under several scenarios when compared with prior work using standard entropic uncertainty relations.
\end{abstract}

\section{Introduction}

Quantum entropic uncertainty relations are a powerful tool in quantum information theory and quantum cryptography.  Such relations typically bound the amount of uncertainty in the outcome of two different measurements as a function only of the measurements themselves.  For instance, the famous Maassen and Uffink inequality \cite{maassen1988generalized} states that if a quantum state is measured in one of two bases $Z$ or $X$, then $H(Z) + H(X) \ge c$, where $H(Z)$ is the entropy in the $Z$ basis outcome (similar for $H(X)$), and $c$ is a function of the ``measurement overlaps'' between the $X$ and $Z$ bases and is maximal whenever $Z$ and $X$ are mutually unbiased bases.  By now there are a large variety of different entropic uncertainty relations \cite{tomamichel2011uncertainty,bialynicki2006formulation,berta2010uncertainty,adabi2016tightening}; see \cite{coles2017entropic} for a general survey.

One very useful entropic uncertainty relation was introduced in \cite{tomamichel2011uncertainty} which bounds the quantum min entropy - a quantity we define formally later, but denote by $\Hmin(A|E)$.  Min entropy is a very useful resource to measure as it is directly related to how many uniform random secret bits may be extracted from a quantum state \cite{renner2008security}.  In a little detail, let's assume $\rho_{ABE}$ is a quantum state where the $A$ and $B$ registers consist of $n$ qubits each and let $Z = \{\ket{0}, \ket{1}\}$ be the standard computational basis for qubits and $X = \{\ket{x_0}, \ket{x_1}\}$ be some other basis with $\ket{x_0} = \sqrt{1/2 + b}$ and $\ket{x_1} = \sqrt{1/2-b}$ for some ``bias'' parameter $b \in [0, .5]$ (e.g., this may be the Hadamard basis if $b=0$).  Note the results will be symmetric if we have $b \in [-.5, 0]$.  Assume a measurement is made on the $A$ system in either the $Z$ basis (resulting in some random variable $A_Z$) or the $X$ basis (yielding random variable $A_X$); similar for the $B$ system. Then, the relation defined in \cite{tomamichel2011uncertainty} roughly states (when restricted to basis measurements of this form), that:
\begin{equation}\label{eq:std-eu}
  \Hmin(A_Z|E) + \Hmax(A_X|B_X) \ge -n\cdot\log_2\left(\frac{1}{2}+b\right).
\end{equation}
Note that the lower-bound is maximal when $b=0$ and one gets $\Hmin(A_Z|E) + \Hmax(A_X|B_X) \ge n$.  This relation is used many times in various quantum cryptographic proofs of security as it allows one to bound the quantum min entropy between Alice and an adversary system Eve, simply as a function of the measurements performed and $\Hmax(A_X|B_X)$, the latter of which may be easily bounded through standard classical sampling arguments and is generally a function of the ``error'' induced in the quantum communication line.

The above expression, as stated, is not only highly useful, but also widely applied.  However, when $b \ne 0$, it is not difficult to see that the lower bound on $\Hmin(A|E)$ begins to drop rapidly.  In this work, we derive a new entropic uncertainty relation for cases when there is non-zero bias in the measurement bases.  Our new relation, though stated formally in Theorem \ref{thm:main}, roughly takes the form:
\begin{equation}\label{eq:new-eu}
  \Hmin(A_Z|E) + n\cdot h\left(Q_X+4b^2 + \epsilon\right) \ge n,
\end{equation}
where $h(x)$ is the binary entropy, $Q_X$ is the relative number of errors in Alice and Bob's $X$ basis measurement, and $\epsilon$ is a function of the number of qubits that were measured in the $X$ basis (and which goes to zero in the asymptotic limit).  Note, the above is only true if $Q_X + 4b^2 + \epsilon < 1/2$ which can be checked by the users of the protocol before continuing.  This already puts an upper-bound on $b$ of $\sqrt{1/8} \approx 0.3535$ (unlike Equation \ref{eq:std-eu} which has an upper bound of $b < 1/2$).  Thus, when there is bias but no noise ($Q_X = 0$ and $\Hmax(A_X|B_X) = 0$), our result performs worse; however, importantly, when there is both noise and bias, our bound often outperforms Equation \ref{eq:std-eu}, sometimes substantially so as our later evaluations show.  Thus, it can be immediately applied to cryptographic proofs of protocols where measurements are biased and there is noise in the channel (either natural noise or adversarial noise) and used to show that higher bit generation rates are possible under these circumstances.  \emph{We comment that our proof in this paper requires} a particular (though arguably minimal, and even enforceable by the users, as we comment later) assumption on the quantum state under investigation.  However, this assumption is only needed in one part of the proof and we suspect our methods can be suitably extended to work, with the same result, even without this assumption.  However, this we leave as future work.

Our relation is a so-called \emph{sampling-based entropic uncertainty relation}, which is a class of entropic uncertainty relations introduced in \cite{krawec2019quantum,yao2022quantum}. These relations utilize a quantum sampling framework of Bouman and Fehr introduced in \cite{bouman2010sampling} for their proof.  Such relations, though still relatively new, have shown to hold numerous benefits in several applications including higher bit generation rates for random number generation \cite{yao2022quantum} (only shown there for un-biased measurements) along with new applications and easier proofs for high-dimensional systems \cite{krawec2020new}.  They  have been shown to be useful in proving security of quantum cryptographic protocols where standard relations such as Equation \ref{eq:std-eu} actually fail (i.e., prior relations show a trivial bound of $0$ whereas sampling based entropic uncertainty methods show a positive bound) \cite{bae2021source,bae2022quantum}.

In this work, we use the sampling-based approach to derive a novel entropic uncertainty relation for cases where user measurements are biased.  This can occur due to faulty measurement devices for example or, perhaps, ``cheaper'' measurement devices are used which cannot perform an exact measurement in a mutually unbiased basis.  It is also interesting from a theoretical point of view as we prove, here, that better bounds on min entropy are possible even if the two measurement bases are ``close'' to one another.  Finally, it shows even more advantages to the sampling-based approach to entropic uncertainty and we suspect our proof methods here may be highly beneficial to other scenarios where measurement or source devices are imperfect.

We note that, while the main contribution of this paper is our new entropic uncertainty bound, we also make other contributions along the way.  We prove an interesting result (Lemma \ref{lemma:Bell-ent}), that may be independently useful,  which bounds the min entropy of a particular superposition state.  We also prove that higher bit generation rates are possible for BB84 with faulty source and measurement devices and higher bit generation rates are possible for a particular quantum random number generation (QRNG) protocol.  Finally, our main results can be easily incorporated into other quantum cryptographic protocols.

\section{Preliminaries}

We begin by introducing some notation that we use throughout this paper.  We denote by $\al_d$ to be a $d$-character alphabet; without loss of generality we simply assume $\al_d = \{0, 1, \cdots, d-1\}$.  Given a word $q \in \al_d^n$, and some subset $t \subset \{1, 2, \cdots, n\}$, we write $q_t$ to mean the substring of $q$ indexed by $t$, that is $q_t = q_{t_1}q_{t_2}\cdots q_{t_{|t|}}$.  We write $q_{-t}$ to mean the substring of $q$ indexed by the complement of $t$.  Finally, for $i = 1, 2, \cdots, n$, we write $q_i$ to mean the $i$'th character of $q$.

Let $a, b \in \al_d^n$.  We write $\num_i(a)$ to be the number of times the character $i$ appears in $a$.  Formally $\num_i(a) = |\{\ell \st a_\ell = i\}|$.  We extend this to multiple counts in the obvious way, for example $\num_{i,j}(a)$ is the number of times the character $i$ and $j$ appear in $a$, or $\num_{i,j}(a) = |\{\ell \st a_\ell = i \text{ or } a_\ell = j\}|$.  For a bit string $x \in \{0,1\}$, we denote by $w(x)$ to be the relative Hamming weight, namely $w(x) = \num_1(x)/|x|$.  Finally, we denote by $\hd(a,b)$ to be the Hamming distance of words $a$ and $b$, namely: $\hd(a,b) = |\{\ell \st a_\ell \ne b_\ell\}|$.

Given a random variable $X$, we denote by $H(X)$ to be the Shannon entropy of $X$.  If $X$ takes outcome $x_i$ with probability $p_i$, then $H(X) = -\sum_ip_i\log_2p_i$.  Note that all logarithms in this paper are base two unless otherwise specified.  If $X$ is a two outcome random variable taking $x_1$ with probability $p$, then we use $h(p)$ to denote the binary entropy and $H(X) = h(p) = -p\log p - (1-p)\log (1-p)$.  We also define the bounded binary entropy function $\hat{h}(x)$, where $\hat{h}(x) = h(x)$ whenever $x < 1/2$ and $\hat{h}(x) = 1$ otherwise.

A density operator $\rho$ is a Hermitian positive semi-definite operator of unit trace acting on some Hilbert space $\mathcal{H}$.  If $\rho_{AE}$ acts on Hilbert space $\mathcal{H}_A\otimes\mathcal{H}_E$, we write $\rho_A$ to mean the state resulting from tracing out the $E$ system, namely $\rho_A = tr_E\rho_{AE}$.  This is similar for multiple systems.  Given a pure state $\ket{\psi}$ we write $\kb{\psi}$ to mean $\kb{\psi} = \ket{\psi}\bra{\psi}$.  We also define $P(\ket{z})$ to be $P(\ket{z}) = \kb{z}$.  Given an orthonormal basis $\mathcal{B} = \{\ket{v_0}, \cdots, \ket{v_{d-1}}\}$, we write $\ket{i}^\mathcal{B}$ to mean $\ket{v_i}$.  Given $i \in \al_d^n$, we write $\ket{i}^\mathcal{B}$ to mean $\ket{v_{i_1},\cdots,v_{i_n}}$, namely the word $i$ represented in the $\mathcal{B}$ basis.  If the basis is not specified, then it is assumed to be the computational basis $\{\ket{0}, \cdots, \ket{d-1}\}$.  Finally, we use $\ket{\phi_i}$ to denote the Bell states:
\begin{align*}
\ket{\phi_0} = \frac{1}{\sqrt{2}}(\ket{00} + \ket{11}) && \ket{\phi_1} = \frac{1}{\sqrt{2}}(\ket{00} - \ket{11})\\
\ket{\phi_2} = \frac{1}{\sqrt{2}}(\ket{01} + \ket{10}) && \ket{\phi_3} = \frac{1}{\sqrt{2}}(\ket{01} - \ket{10})\\
\end{align*}

Given $\rho_{A}$ we write $H(A)_\rho$ to mean the von Neumann entropy of $\rho_A$, namely $H(A)_\rho = -tr(\rho_A\log\rho_A)$.  Given $\rho_{AE}$, we write $H(A|E)_\rho$ to be the conditional von Neumann entropy, namely $H(A|E)_\rho = H(AE)_\rho - H(E)_\rho$.  We write $\Hmin(A|E)_\rho$ to be the \emph{conditional quantum min entropy} defined to be \cite{renner2008security}:
\begin{equation}
  \Hmin(A|E)_\rho = \sup_{\sigma_E}\max\left\{\lambda\in\mathbb{R} \st 2^{-\lambda}I_A\otimes\sigma_E - \rho_{AE} \ge 0\right\},
\end{equation}
where $A\ge 0$ is used to denote that $A$ is positive semi-definite.  The \emph{smooth conditional min entropy} is denoted $\Hmin^\epsilon(A|E)_\rho$ and is defined to be:
$\Hmin^\epsilon(A|E)_\rho = \sup_{\sigma_{AE}} \Hmin(A|E)_\sigma,$
where the supremum is over all density operators $\sigma_{AE}$ such that $\trd{\sigma_{AE} - \rho_{AE}} \le \epsilon$.  Here we use $\trd{A}$ to mean the \emph{trace distance} of operator $A$.

Quantum min entropy is a very important quantity to measure in quantum cryptography as it relates directly to how many uniform random secret bits may be extracted from a quantum state \cite{renner2008security}.  In detail, assume $\rho_{AE}$ is a \emph{classical-quantum state} (or cq-state).  That is, the $A$ register is classical while the $E$ portion is potentially quantum, thus $\rho_{AE} = \sum_a p(a) \kb{a}\otimes\rho_E^a$.  Assume the $A$ register is $N$-bits in size (i.e., $a \in \{0,1\}^N$ in the sum).  Then \emph{privacy amplification} is a process of picking a random two-universal hash function $f:\{0,1\}^N\rightarrow\{0,1\}^\ell$ and disclosing the choice to Eve, then hashing the $A$ register to $f(A)$ yielding cq-state $\sigma_{KE'}$.  The state $\sigma_{KE'}$ satisfies the following inequality as proven in \cite{renner2008security}:
\begin{equation}\label{eq:PA}
  \trd{\sigma_{KE'} - \uniform_\ell\otimes \sigma_{E'}} \le 2^{-\frac{1}{2}(\Hmin^\epsilon(A|E)_\rho - \ell)} + 2\epsilon,
\end{equation}
where $\uniform_\ell = I/2^\ell$ is a uniform random string of size $\ell$-bits independent of Eve.  Thus, to determine how large $\ell$ can be, one requires a bound on the quantum min entropy \emph{before} privacy amplification.

\subsection{Properties of Quantum Min Entropy}

Min entropy has several properties that we will utilize later.  In particular, given a cqc-state or qqc-state of the form $\rho_{AEC} = \sum_cp(c)\kb{c}\otimes\rho_{AE}^{(c)}$, then:
\begin{equation}\label{eq:min-ent-cl}
\Hmin(A|E)_\rho \ge \Hmin(A|EC)_\rho \ge \min_c \Hmin(A|E)_{\rho^{(c)}}.
\end{equation}
The above is easily shown using the definition of min entropy.  Informally it says that, conditioning on certain events $C$ happening, the min entropy is the ``worst-case'' min entropy of each individual sub-event.


The following lemma from \cite{bouman2010sampling} lets us bound the min entropy in a superposition as a function of the min entropy of a mixed state, assuming the superposition does not have ``too many'' terms:
\begin{lemma}\label{lemma:superposition}
  (From \cite{bouman2010sampling}, based on a lemma in \cite{renner2008security}): Given two orthonormal bases $Z$ and $X$ of some Hilbert space $\mathcal{H}_A$, let $\ket{\psi}_{AE}$ be some pure state of the form $\ket{\psi}_{AE} = \sum_{i\in J}\alpha_i\ket{i}^Z\otimes\ket{E_i}$ where the $\ket{E_i}$ states are arbitrary, but normalized.  Then, if we define the mixed state $\rho_{AE} = \sum_{i\in J} |\alpha_i|^2\kb{i}^Z\otimes\kb{E_i}$, it holds that:
  \[
  \Hmin(X|E)_\psi \ge \Hmin(X|E)_\rho - \log_2|J|,
  \]
  where the $X$ registers, above, are produced by measuring the $A$ register (originally written in the $Z$ basis above), in the $X$ basis.
\end{lemma}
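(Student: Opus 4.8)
The plan is to reduce everything to a single operator inequality between the two post-measurement classical-quantum states, and then read off the conclusion from the semidefinite-program definition of $\Hmin$. Write $\{\ket{x}^X\}$ for the $X$ basis, and let $\sigma^\psi_{XE}$ and $\sigma^\rho_{XE}$ be the states obtained by measuring the $A$ register of $\ket{\psi}_{AE}$, respectively of $\rho_{AE}$, in the $X$ basis. First I would expand both in the $X$ basis: applying $\kb{x}^X\otimes I_E$ to $\ket{\psi}_{AE}$ yields $\ket{x}^X\otimes\ket{v_x}$ with the sub-normalized vector $\ket{v_x}=\sum_{i\in J}\alpha_i\braket{x|i}\ket{E_i}$ (writing $\braket{x|i}$ for $\bra{x}^X\ket{i}^Z$), so that $\sigma^\psi_{XE}=\sum_x\kb{x}^X\otimes\ketbra{v_x}$; performing the analogous computation for $\rho_{AE}$, which is just the $Z$-dephased version of $\ketbra{\psi}_{AE}$, produces no $\ket{E_i}\bra{E_j}$ cross terms and gives $\sigma^\rho_{XE}=\sum_x\kb{x}^X\otimes\omega_x$ with $\omega_x=\sum_{i\in J}|\alpha_i|^2|\braket{x|i}|^2\,\ketbra{E_i}$.

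The core step is the bound $\sigma^\psi_{XE}\le|J|\cdot\sigma^\rho_{XE}$. Since both operators are block-diagonal in the classical register $X$, this reduces, for each fixed $x$, to the inequality $\ketbra{v_x}\le|J|\,\omega_x$ on $\mathcal{H}_E$. I would prove it by testing against an arbitrary $\ket{w}\in\mathcal{H}_E$: with $c_i=\alpha_i\braket{x|i}$ we have $|\braket{w|v_x}|^2=\left|\sum_{i\in J}c_i\braket{w|E_i}\right|^2$, and Cauchy--Schwarz in the form $\left|\sum_{i\in J}z_i\right|^2\le|J|\sum_{i\in J}|z_i|^2$ (applied with $z_i=c_i\braket{w|E_i}$) bounds this by $|J|\sum_{i\in J}|c_i|^2|\braket{w|E_i}|^2=|J|\,\bra{w}\omega_x\ket{w}$. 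As $\ket{w}$ was arbitrary, this is exactly the claimed operator inequality, and note that it uses nothing about the $\ket{E_i}$ beyond normalization (in particular they need not be orthogonal).

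To conclude, I would invoke the SDP characterization of min entropy. Take $\lambda=\Hmin(X|E)_{\sigma^\rho}$ together with a state $\tau_E$ satisfying $2^{-\lambda}I_X\otimes\tau_E-\sigma^\rho_{XE}\ge0$ (passing to a sequence of near-optimal $\tau_E$ if the supremum is not attained). Multiplying by $|J|\ge1$ and chaining with the bound above gives $2^{-\lambda}|J|\,I_X\otimes\tau_E\ge|J|\,\sigma^\rho_{XE}\ge\sigma^\psi_{XE}$, and since $2^{-\lambda}|J|=2^{-(\lambda-\log_2|J|)}$, the same $\tau_E$ witnesses $\Hmin(X|E)_{\sigma^\psi}\ge\lambda-\log_2|J|$, which is the statement.

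I do not anticipate a genuine obstacle; the lemma is short once the right reduction is spotted. The one place that needs care is the treatment of the cross terms $i\ne j$ that appear in $\sigma^\psi_{XE}$ but are absent from $\sigma^\rho_{XE}$ — absorbing these is precisely what the factor $|J|$ buys via Cauchy--Schwarz — and one should sanity-check that $|J|$, rather than some smaller constant, is genuinely needed (it is: the bound is saturated when the phases of the $c_i$ and the vectors $\ket{E_i}$ align so that all $|J|$ contributions add coherently). The remaining ingredients, namely the block structure of the cq-states and monotonicity of $\Hmin$ under the PSD ordering, are immediate from the definitions recalled above.
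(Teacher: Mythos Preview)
Your argument is correct: the operator inequality $\sigma^\psi_{XE}\le |J|\,\sigma^\rho_{XE}$, established blockwise via Cauchy--Schwarz on the cross terms, followed by the immediate transfer of any feasible $\tau_E$ in the min-entropy SDP, is exactly the right mechanism, and every step you wrote checks out.

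As for comparison with the paper: there is nothing to compare against. The paper does not prove Lemma~\ref{lemma:superposition}; it is stated as an imported result from \cite{bouman2010sampling} (itself building on \cite{renner2008security}) and is used as a black box in the proof of Lemma~\ref{lemma:Bell-ent}. Your proof is in fact the standard one found in those references --- the operator-inequality-plus-Cauchy--Schwarz route is precisely how Renner and then Bouman--Fehr establish it --- so you have reconstructed the original argument rather than found an alternative.
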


The next lemma we need is from \cite{krawec2022security} and shows how one may compute the min entropy in a state that is initially close to another (in trace distance) but after conditioning on an outcome (after which, the states may no longer be close and, thus, smooth  min entropy by itself cannot be used):
\begin{lemma}\label{lemma:prob}
  (From \cite{krawec2022security}): Let $\rho, \sigma$, and $\tau$, be three quantum states with $\rho$ and $\sigma$ acting on the same Hilbert space ($\tau$ may be arbitrary or trivial).  Also, let $\mathcal{F}$ be a CPTP map with the property that:
  \begin{align*}
    \mathcal{F}(\tau\otimes\rho) &= \sum_xp(x)\kb{x}\otimes \rho_{AE}^{(x)}\\
    \mathcal{F}(\tau\otimes\sigma) &= \sum_xq(x)\kb{x}\otimes \sigma_{AE}^{(x)}.
  \end{align*}
  Then, if $\frac{1}{2}\trd{\rho-\sigma}\le \epsilon$, it holds that:
  \[
  Pr\left(\Hmin^{4\epsilon+3\epsilon^{1/3}}(A|E)_{\rho^{(x)}} \ge \Hmin(A|E)_{\sigma^{(x)}}\right) \ge 1 - 2\epsilon^{1/3},
  \]
  where the probability is over the random outcome $X$ in the above states.
  
\end{lemma}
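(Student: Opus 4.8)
The plan is to lean on the only two facts that make this statement true: trace distance does not increase under a CPTP map, and it is additive across an orthonormal classical register. First I would apply $\mathcal{F}$ to $\tau\otimes\rho$ and to $\tau\otimes\sigma$. Since $\tau$ is a density operator (or trivial), $\trd{\tau\otimes\rho - \tau\otimes\sigma} = \trd{\rho-\sigma} \le 2\epsilon$, and monotonicity gives $\trd{\mathcal{F}(\tau\otimes\rho) - \mathcal{F}(\tau\otimes\sigma)} \le 2\epsilon$. Both outputs are block diagonal in the orthonormal states $\ket{x}$, so the left-hand side equals $\sum_x \trd{p(x)\rho_{AE}^{(x)} - q(x)\sigma_{AE}^{(x)}}$. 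Setting $\Delta_x = \tfrac12\trd{p(x)\rho^{(x)} - q(x)\sigma^{(x)}}$ we obtain $\sum_x \Delta_x \le \epsilon$.

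Next I would pass from these joint quantities to the conditional states. Writing $p(x)\rho^{(x)} - q(x)\sigma^{(x)} = p(x)(\rho^{(x)}-\sigma^{(x)}) + (p(x)-q(x))\sigma^{(x)}$ and using $|p(x)-q(x)| = |tr(p(x)\rho^{(x)}-q(x)\sigma^{(x)})| \le 2\Delta_x$, one gets the per-symbol estimate $p(x)\,\trd{\rho^{(x)}-\sigma^{(x)}} \le 4\Delta_x$; symbols with $p(x)=0$ are irrelevant to the probability statement, and the same inequality with the roles of $p$ and $q$ swapped lets one take the probability over whichever of the two distributions is convenient.

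Then I would invoke Markov's inequality: the set $B$ of symbols with $\trd{\rho^{(x)}-\sigma^{(x)}} > t$ satisfies $\tfrac{t}{4}\sum_{x\in B}p(x) < \sum_{x\in B}\Delta_x \le \epsilon$, hence $Pr(x\in B) < 4\epsilon/t$ where the probability is over the distribution $p$. Choosing $t$ so that this failure probability equals $2\epsilon^{1/3}$, and absorbing the small slack coming from the $|p(x)-q(x)|$ terms and from switching between the $p$ and $q$ distributions into a single constant, yields: with probability at least $1-2\epsilon^{1/3}$ over $X$, $\trd{\rho^{(x)}-\sigma^{(x)}} \le 4\epsilon + 3\epsilon^{1/3}$. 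For each such $x$ the state $\sigma_{AE}^{(x)}$ is an admissible competitor in the supremum defining $\Hmin^{4\epsilon+3\epsilon^{1/3}}(A|E)_{\rho^{(x)}}$, so $\Hmin^{4\epsilon+3\epsilon^{1/3}}(A|E)_{\rho^{(x)}} \ge \Hmin(A|E)_{\sigma^{(x)}}$, which is the claim.

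Everything apart from the final bookkeeping is routine trace-norm manipulation. The delicate point, and the main (if modest) obstacle, is controlling the mismatch between the two output distributions $p$ and $q$ alongside symbols of tiny probability — the conditional states being normalized by $p(x)$ or $q(x)$ — which is precisely where one trades a slightly lossy estimate for the clean constants $4\epsilon$ and $3\epsilon^{1/3}$.
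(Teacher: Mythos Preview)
The paper does not prove this lemma; it is quoted verbatim from \cite{krawec2022security} and only \emph{used} later in the proof of Theorem~\ref{thm:main}. So there is no in-paper argument to compare against.

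That said, your outline is the natural one and it is correct. Monotonicity of trace distance under $\mathcal{F}$, block-diagonality in the classical register, the per-symbol estimate $p(x)\trd{\rho^{(x)}-\sigma^{(x)}}\le 4\Delta_x$ (obtained exactly as you wrote, via $|p(x)-q(x)|\le 2\Delta_x$), and then Markov, is precisely the mechanism behind statements of this type. In fact, carrying your arithmetic through without the ``absorbing slack'' hand-wave gives something \emph{tighter} than the quoted constants: from $\sum_x p(x)\trd{\rho^{(x)}-\sigma^{(x)}}\le 4\epsilon$ and Markov with threshold $t=2\epsilon^{2/3}$ you already get $\Pr\!\big(\trd{\rho^{(x)}-\sigma^{(x)}}>2\epsilon^{2/3}\big)\le 2\epsilon^{1/3}$, and $2\epsilon^{2/3}\le 4\epsilon+3\epsilon^{1/3}$ for all $\epsilon\le 1$. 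So the specific smoothing radius $4\epsilon+3\epsilon^{1/3}$ in the lemma is simply a (loose) packaging of the same estimate; you do not need any extra ``slack'' step to reach it. The only thing you left vague that is worth nailing down is which distribution the probability is taken over; your observation that the roles of $p$ and $q$ are symmetric (so $q(x)\trd{\rho^{(x)}-\sigma^{(x)}}\le 4\Delta_x$ holds as well) handles that, and one can also note $\sum_x|p(x)-q(x)|\le 2\epsilon$ to transfer a high-probability event from one distribution to the other at additive cost $2\epsilon$.
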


Finally, we prove the following lemma below in this work which may be of independent interest.  It bounds the min entropy of a quantum state that is a superposition of Bell states on which we have some, but not all, information on (and, thus, Lemma \ref{lemma:superposition} could not be used directly as that lemma requires full information on the superposition size which our lemma below does not require):
\begin{lemma}\label{lemma:Bell-ent}
  Given $\ket{\psi} = \sum_{i \in J}\alpha_i\ket{\phi_i}\ket{E_i}$, where $J = \left\{i\in\al_4^n\st \frac{1}{n}\num_{1,3}(i) \le Q\right\}$, let $\rho_{AE}$ be the result of measuring the first particle of each Bell pair in the $Z$ basis (resulting in register $A$) and tracing out the second particle of each Bell pair.  Then it holds that:
  \begin{equation}
    \Hmin(A|E)_\rho \ge n\left(1 - \hat{h}\left(Q\right)\right).
  \end{equation}
\end{lemma}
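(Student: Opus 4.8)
The plan is to reduce the claim to Lemma~\ref{lemma:superposition} after first recognizing that the partial trace in the statement decoheres exactly the ``harmless'' part of the superposition. The first step is to rewrite each Bell pair symplectically: under the bijection between $\{0,1,2,3\}$ and $\{0,1\}^2$ sending $i_\ell\leftrightarrow(x_\ell,z_\ell)$ with $x_\ell=\lfloor i_\ell/2\rfloor$ and $z_\ell=i_\ell\bmod 2$, one verifies directly from the definitions that $\ket{\phi_{i_\ell}}=\frac{1}{\sqrt 2}\sum_{a\in\{0,1\}}(-1)^{z_\ell a}\ket{a}\ket{a\oplus x_\ell}$, where the first qubit is the one to be measured and the second the one to be traced out. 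Under this bijection $\num_{1,3}(i)$ is exactly $n$ times the relative Hamming weight of $z=z_1\cdots z_n$, so the constraint defining $J$ becomes ``$w(z)\le Q$'' while the string $x=x_1\cdots x_n$ is unconstrained. Expanding, relabelling $\alpha_i,\ket{E_i}$ by the pair $(x,z)$, and letting $B$ be the register holding the second qubit of every pair,
\[
\ket{\psi}=\frac{1}{2^{n/2}}\sum_{x\in\{0,1\}^n}\;\sum_{z:\,w(z)\le Q}\alpha_{x,z}\sum_{a\in\{0,1\}^n}(-1)^{z\cdot a}\ket{a}_A\ket{a\oplus x}_B\ket{E_{x,z}}_E .
\]

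Next I would perform the $Z$-measurement on $A$ and trace out $B$. Measuring $A$ in the computational basis gives outcome $a$ with probability $2^{-n}$ and leaves the unnormalized $BE$-vector $\sum_{x}\ket{a\oplus x}_B\ket{F_{x,a}}_E$ with $\ket{F_{x,a}}:=\sum_{z:\,w(z)\le Q}\alpha_{x,z}(-1)^{z\cdot a}\ket{E_{x,z}}$. Since the vectors $\{\ket{a\oplus x}\}_{x\in\{0,1\}^n}$ are orthonormal, tracing out $B$ annihilates all cross-terms in $x$, giving
\[
\rho_{AE}=\frac{1}{2^n}\sum_{a\in\{0,1\}^n}\kb{a}_A\otimes\sum_{x\in\{0,1\}^n}\kb{F_{x,a}}_E .
\]
Informally, the second Bell particle has ``recorded'' the correlation pattern $x$, so discarding it collapses the (roughly $2^{n(1+h(Q))}$-term) superposition down to a residual coherent superposition over only the $\le 2^{nh(Q)}$ admissible phase patterns $z$.

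The rest is bookkeeping. Adjoin a classical register $X$ recording $x$ to obtain $\hat\rho_{AXE}=\frac{1}{2^n}\sum_{a,x}\kb{a}_A\otimes\kb{x}_X\otimes\kb{F_{x,a}}_E$, whose $AE$-marginal is $\rho_{AE}$; then Equation~\ref{eq:min-ent-cl} gives $\Hmin(A|E)_\rho=\Hmin(A|E)_{\hat\rho}\ge\Hmin(A|EX)_{\hat\rho}\ge\min_x\Hmin(A|E)_{\hat\rho^{(x)}}$, the minimum over $x$ with $p(x):=\sum_{z:\,w(z)\le Q}|\alpha_{x,z}|^2>0$. For fixed such $x$, a short computation shows $\hat\rho^{(x)}_{AE}$ is precisely the classical--quantum state obtained by measuring, in the Hadamard basis, the $C$-register of the pure state $\ket{\psi_x}_{CE}=p(x)^{-1/2}\sum_{z:\,w(z)\le Q}\alpha_{x,z}\ket{z}_C\ket{E_{x,z}}_E$. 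This $\ket{\psi_x}$ has exactly the form required by Lemma~\ref{lemma:superposition} with $J'=\{z\in\{0,1\}^n:\,w(z)\le Q\}$, so $\Hmin(A|E)_{\hat\rho^{(x)}}\ge\Hmin(A|E)_{\tilde\sigma_x}-\log_2|J'|$, where $\tilde\sigma_x$ is the cq-state obtained by Hadamard-measuring $C$ in the mixture $\sum_{z:\,w(z)\le Q}\frac{|\alpha_{x,z}|^2}{p(x)}\kb{z}_C\otimes\kb{E_{x,z}}_E$. But a Hadamard measurement of any computational basis state is uniform and uncorrelated with the rest, so $\tilde\sigma_x=\uniform_n\otimes\tau_E$ and $\Hmin(A|E)_{\tilde\sigma_x}=n$. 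Finally the standard Hamming-ball volume bound gives $|J'|=\sum_{k\le Qn}\binom{n}{k}\le 2^{n\hat{h}(Q)}$ (the $\hat{h}$ covering the regime $Q\ge 1/2$), whence $\Hmin(A|E)_{\hat\rho^{(x)}}\ge n-n\hat{h}(Q)$ for every $x$, and the chain of inequalities above yields the lemma.

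The only step with any real content is the partial-trace observation: a priori the superposition in $\ket{\psi}$ has far too many terms for Lemma~\ref{lemma:superposition} to say anything nontrivial, and the whole argument hinges on noticing that discarding the second half of each Bell pair (after the $Z$-measurement on the first half) automatically projects out the $x$-degrees of freedom, shrinking the superposition just enough that the $\log_2|J'|$ penalty is precisely the $n\hat{h}(Q)$ we can afford. Recasting the conditional state $\hat\rho^{(x)}$ as a Hadamard measurement of an explicit small superposition is then the routine move that lets us invoke Lemma~\ref{lemma:superposition} without modification.
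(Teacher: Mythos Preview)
Your proof is correct and takes essentially the same approach as the paper: both hinge on the observation that tracing out the second Bell particle decoheres the unconstrained degree of freedom in the superposition, leaving a residual superposition of size at most $2^{n\hat{h}(Q)}$ to which Lemma~\ref{lemma:superposition} directly applies. The only difference is parametrization---the paper expands the Bell states in the Hadamard basis and conditions on the $X$-basis label $b$ of the traced-out particle (so the constraint reads $\hd(a,b)\le nQ$), whereas you use the computational-basis symplectic $(x,z)$ decomposition and condition on $x$; these are equivalent views of the same argument.
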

\begin{proof}
  We may rewrite $\ket{\psi}$ by permuting subspaces such that the second particle of each Bell pair is ``pushed'' to the left-most subspace while the first particle of each pair is pushed to the middle register (the right-most register will remain $E$).  Noting that $\ket{\phi_0}$ and $\ket{\phi_2}$ are of the form $\frac{1}{\sqrt{2}}(\ket{+,+} \pm \ket{-,-})$ while $\ket{\phi_1}$ and $\ket{\phi_3}$ are of the form $\frac{1}{\sqrt{2}}(\ket{+,-} \pm \ket{-,+})$, the state, after this permutation of subspaces, can be written in the form:
  \begin{equation}
    \ket{\psi} \cong \sum_{b\in\{0,1\}^n}\beta_b\ket{b}^X\otimes\sum_{\substack{a\in\{0,1\}^n\\\frac{1}{n}\hd(a,b)\le Q}}\beta_{a|b}\ket{a}^X\ket{E_{a|b}}.
  \end{equation}
Above, $X$ is the usual Hadamard basis.  From this, we trace out the left-most register (which was originally the second particle of each Bell pair) - this, of course, is equivalent to first measuring the system and then tracing it out - yielding the state:
  \begin{equation}
    \rho_{RE} = \sum_b|\beta_b|^2\underbrace{P\left(\sum_{\substack{a\in\{0,1\}^n\\\frac{1}{n}\hd(a,b)\le Q}}\beta_{a|b}\ket{a}^X\ket{E_{a|b}}\right)}_{\rho^b_{RE}},
  \end{equation}
  where, recall, $P(\ket{z}) = \kb{z}$.

  The $R$ system is now measured in the $Z$ basis yielding $\sum_{b}|\beta_b|^2\rho_{AE}^b$.  From Equation \ref{eq:min-ent-cl}, we have $\Hmin(A|E)_\rho \ge \min_b\Hmin(A|E)_{\rho^b}$.  From Lemma \ref{lemma:superposition}, we have:
  \[
  \Hmin(A|E)_{\rho^b} \ge n - \log\left|\left\{a\in\{0,1\}^n\st \frac{1}{n}\hd(a,b) \le Q\right\}\right|.
  \]
  Noting that, for any $b$, the size of the set $\left\{a\in\{0,1\}^n\st \frac{1}{n}\hd(a,b) \le Q\right\}$ can be bounded using the well-known bound on the size of a Hamming ball, namely
  \[
  \left|\left\{a\in\{0,1\}^n\st \frac{1}{n}\hd(a,b) \le Q\right\}\right| \le 2^{n\hat{h}(Q)},
  \]
  completes the proof.
\end{proof}

\subsection{Quantum Sampling}\label{sec:sample}
Our new entropic uncertainty relation is a so-called \emph{sampling based entropic uncertainty relation} \cite{yao2022quantum} which relies, for its proof, on the quantum sampling framework introduced by Bouman and Fehr in \cite{bouman2010sampling}.  Since we use this framework to prove our main result, we highlight some of the main concepts here.  For more information, the reader is referred to the original sampling paper \cite{bouman2010sampling} from which all information in this section is derived.

A \emph{classical sampling strategy} over $\al_d^N$ is a triple $(P_T, g, r)$, where $P_T$ is a probability distribution over subsets of $\{1, 2, \cdots, N\}$; $g$ is a ``guess function,'' $g:\al_d^* \rightarrow \mathbb{R}$; and $r$ is a ``target function,'' $r:\al_d^*\rightarrow \mathbb{R}$.  Given a word $q \in \al_d^N$, the strategy will first sample $t$ according to $P_T$, observe $q_t$ and compute $g(q_t)$ (or, equivalently, simply observe $g(q_t)$), and use this as a guess for the value of $r(q_{-t})$.  That is, given an observed portion of $q$, the strategy should use that to guess at the target value of an \emph{unobserved} portion of the string.

Let $\delta > 0$, then we define the set of \emph{ideal words} to be:
\[
\mathcal{G}_t = \{q\in\al_d^N \st g(q_t) \dc r(q_{-t})\},
\]
where we write $x\dc y$ to mean $|x - y| \le \delta$.  Then, the \emph{error probability} of the sampling strategy is defined to be:
\begin{equation}
  \epsilon^{cl} = \max_{q\in\al_d^N}Pr\left(q \not \in \mathcal{G}_t\right),
\end{equation}
where the above probability is over the choice of subset $t$.  It is clear from this definition that, for any $q\in\al_d^N$, the probability that the given sampling strategy fails to give a $\delta$-close guess of the target value is at most $\epsilon^{cl}$.  Note that the ``cl'' superscript is used here as a reminder that this is the classical failure probability.

A sampling strategy as above may be promoted to a quantum one.  Let $B$ be a $d$-dimensional orthonormal basis and let $\ket{\psi}_{AE}$ be some quantum state where the $A$ portion lives in a $d^N$ dimensional Hilbert space.  Note that the state $\ket{\psi}$ may be arbitrary.  Then the sampling strategy will first choose a subset $t$ according to $P_T$, and then measure those systems in $A$ indexed by $t$ using basis $B$ to produce outcome $q_t\in\al_d^{|t|}$.  The unmeasured portion collapses to some state $\ket{\psi^t_q}$.  Bouman and Fehr's main result is to give a rigorous analysis of this post measured state.

Formally, we define a space of \emph{ideal states for subset $t$ with respect to basis $B$} (or simply \emph{ideal states} when the context is clear) as follows:
\[
\text{span}\left(\mathcal{G}_t\right)\otimes \mathcal{H}_E  = \text{span}\{\ket{q}^B \st q \in \mathcal{G}_t\}\otimes\mathcal{H}_E.
\]
Note that the definition depends on the chosen basis $B$.  An ``ideal state for subset $t$'' (with respect to basis $B$) is one that lives in this space.  In general, if a $B$ basis measurement is performed on subset $t$ of an ideal state, yielding outcome $q$, then it is guaranteed that the post-measured state is of the form:
\[
\ket{\psi^t_q} = \sum_{i \in J_q}\alpha_i\ket{i}^B\otimes\ket{E_i},
\]
where $J_q = \{i\in\al_d^{N-|t|} \st g(q) \dc r(i)\}$.  Bouman and Fehr's main result is stated in the Theorem below:
\begin{theorem}\label{thm:sample}
  (From \cite{bouman2010sampling}, though we reword it here for our application): Given a classical sampling strategy with error probability $\epsilon^{cl}$ for a given $\delta > 0$, it holds that for any  $\ket{\psi} \in \mathcal{H}_A\otimes\mathcal{H}_E$ (where $\mathcal{H}_A$ is a $d^N$ dimensional Hilbert space) and any $d$-dimensional orthonormal basis $B$, that there exists a collection of ideal states $\{\ket{\phi^t}\}$, indexed by every possible subset choice $t$, such that $\ket{\phi^t}$ are ideal states for subset $t$ with respect to basis $B$, and it holds that:
  \begin{equation}
    \frac{1}{2}\trd{\sum_tP_T(t)\kb{t}\otimes\left(\kb{\psi} - \kb{\phi^t}\right)} \le \sqrt{\epsilon^{cl}}.
  \end{equation}
\end{theorem}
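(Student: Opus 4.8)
The plan is to construct the ideal states explicitly as \emph{normalized projections} of $\ket{\psi}$ onto the ideal subspaces, and then to control the averaged error using the additivity of the trace norm over the classical $t$-register together with Jensen's inequality and the uniform classical bound $\epsilon^{cl}$. First I would fix a subset $t$ and split $\mathcal{H}_A = \mathcal{H}_{A_t}\otimes\mathcal{H}_{A_{-t}}$ along the coordinates in $t$ and its complement. Expanding $\ket{\psi}$ in the basis $B$ on all of $\mathcal{H}_A$, write $\ket{\psi}_{AE} = \sum_{q\in\al_d^N}\lambda_q\ket{q}^B\otimes\ket{E_q}$ with the $\ket{E_q}$ normalized and $\sum_q|\lambda_q|^2 = 1$. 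Let $\Pi^t$ be the orthogonal projector onto the ideal space $\text{span}\{\ket{q}^B\st q\in\mathcal{G}_t\}\otimes\mathcal{H}_E$, which in this basis is simply $\sum_{q\in\mathcal{G}_t}\ketbra{q}^B\otimes I_E$. Define $\ket{\phi^t} = \Pi^t\ket{\psi}/\trd{\Pi^t\ket{\psi}}$ whenever $\Pi^t\ket{\psi}\ne 0$, and otherwise let $\ket{\phi^t}$ be an arbitrary normalized vector in the ideal space. By construction $\ket{\phi^t}$ is an ideal state for subset $t$ with respect to $B$; moreover, since $\ket{\phi^t}$ only carries components with $q\in\mathcal{G}_t$, i.e.\ $g(q_t)\dc r(q_{-t})$, a $B$-measurement of its $t$-subsystem yielding outcome $q$ leaves a residual state supported on $\{\ket{i}^B\st g(q)\dc r(i)\}$, matching the asserted form with index set $J_q$.

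Next, for this choice of $\ket{\phi^t}$ I would compute the single-subset trace distance. A direct calculation gives $\braket{\psi|\phi^t} = \trd{\Pi^t\ket{\psi}}$, which is real and nonnegative, so the standard pure-state identity yields $\tfrac{1}{2}\trd{\ketbra{\psi}-\ketbra{\phi^t}} = \sqrt{1-\trd{\Pi^t\ket{\psi}}^2} = \trd{(I-\Pi^t)\ket{\psi}}$, using that $\Pi^t$ is a projector. Writing $\eta_t := \trd{(I-\Pi^t)\ket{\psi}}^2 = \sum_{q\notin\mathcal{G}_t}|\lambda_q|^2$, this $\eta_t$ is exactly the probability of obtaining a non-ideal outcome $q\notin\mathcal{G}_t$ when all of $\mathcal{H}_A$ is measured in basis $B$. (In the degenerate case $\Pi^t\ket{\psi}=0$ we have $\eta_t = 1$ and the left-hand side is at most $1$, so the equality still holds as an inequality.)

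Then I would average over $t$. Since $\{\ket{t}\}$ are orthonormal, the operator $\sum_tP_T(t)\kb{t}\otimes(\ketbra{\psi}-\ketbra{\phi^t})$ is block-diagonal in the $t$-register, so its trace norm is $\sum_tP_T(t)\trd{\ketbra{\psi}-\ketbra{\phi^t}}$, and therefore the left-hand side of the theorem equals $\sum_tP_T(t)\sqrt{\eta_t}$. By concavity of the square root (Jensen's inequality) this is at most $\sqrt{\sum_tP_T(t)\eta_t}$, and interchanging the two sums gives $\sum_tP_T(t)\eta_t = \sum_q|\lambda_q|^2\sum_{t\st q\notin\mathcal{G}_t}P_T(t) = \sum_q|\lambda_q|^2\,\Pr_t(q\notin\mathcal{G}_t)$. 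Each probability $\Pr_t(q\notin\mathcal{G}_t)$ is at most $\epsilon^{cl}$ by the definition of the classical error probability, and $\sum_q|\lambda_q|^2 = 1$, so the whole expression is bounded by $\sqrt{\epsilon^{cl}}$, completing the argument.

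The main obstacle is entirely in the first step: setting up the decomposition of $\ket{\psi}$ with respect to the (non-contiguous) partition induced by $t$ and the basis $B$, and verifying rigorously that the normalized projection onto $\text{span}\{\ket{q}^B\st q\in\mathcal{G}_t\}\otimes\mathcal{H}_E$ genuinely qualifies as an ideal state for subset $t$ — in particular that its post-$t$-measurement residual has support contained in $J_q$ — while also handling the degenerate zero-overlap case cleanly. Once that is in place, Steps 2--3 are a routine chain: the pure-state trace-distance/fidelity identity, additivity of the trace norm over the block-diagonal classical register, Jensen's inequality, a sum interchange, and the defining uniform bound $\epsilon^{cl}$.
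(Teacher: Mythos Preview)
Your proposal is correct and follows exactly the construction the paper itself sketches immediately after the theorem statement (which in turn is the original Bouman--Fehr argument): define $\ket{\phi^t}$ as the normalized projection of $\ket{\psi}$ onto $\text{span}(\mathcal{G}_t)\otimes\mathcal{H}_E$, use the pure-state trace-distance/fidelity identity to get $\tfrac{1}{2}\trd{\kb{\psi}-\kb{\phi^t}}=\sqrt{\eta_t}$ with $\eta_t=\sum_{q\notin\mathcal{G}_t}|\lambda_q|^2$, then average via block-diagonality, Jensen, and the uniform bound $\Pr_t(q\notin\mathcal{G}_t)\le\epsilon^{cl}$. One small notational quibble: you write $\trd{\Pi^t\ket{\psi}}$ for the Hilbert-space norm of a vector, whereas the paper reserves $\trd{\cdot}$ for the trace norm of operators; just use $\|\Pi^t\ket{\psi}\|$ there to avoid ambiguity.
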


The proof of the above theorem is actually by construction where the ideal states are defined by projecting onto the ideal subspace and a subspace orthogonal to it.  In particular, given a fixed $t$ and an input state $\ket{\psi} = \sum_i\ket{i}^B\otimes\ket{E_i}$, then the ideal states are defined by:
\begin{align*}
  \ket{\psi} &= \braket{\phi^t|\psi}\ket{\phi^t} + \braket{\bar{\phi}^t|\psi}\ket{\bar{\phi}^t}\\
  &= \alpha\sum_{i \in \mathcal{G}_t}\ket{i}^B\otimes\ket{E_i} + \beta\sum_{i\not\in\mathcal{G}_t}\ket{i}^B\otimes\ket{E_i}.
\end{align*}
Thus, given some property of Eve's ancilla in the real state, those properties may translate also to the ideal system, a point that will be important in the proof of our main theorem.

We comment on a few things.  First, Theorem \ref{thm:sample} let's us promote classical sampling strategies to quantum ones where the error (in terms, now, of trace distance) only increases quadratically.  Second, one doesn't actually have to perform the sampling strategy in the given basis - the above states exist regardless.  Thus, one may use the existence of these states but actually perform different measurements on them, yet still be able to say something about the post-measured state.  We will use this later in our proof.  Finally, though our wording of Theorem \ref{thm:sample} is different from how it was worded originally in \cite{bouman2010sampling}, their original proof is by construction and readily leads to the above statement as shown in \cite{yao2022quantum}.

Before leaving this section, we discuss a basic sampling strategy for bit strings (i.e., $d=2$).  Let $P_T$ be the uniform distribution on subsets of size $m$ (with $m < N/2$) and let $g(x) = r(x) = w(x)$.  From this, it is clear that the set of ideal words is:
\begin{equation}\label{eq:good-1}
\mathcal{G}_t = \left\{q\in\{0,1\}^N \st w(q_t) \dc w(q_{-t})\right\},
\end{equation}
where $n = N-m$.  Thus, this strategy observes the relative number of $1$'s in the given string $q_t$ and uses this as a guess as to the number of $1$'s in the unobserved portion $q_{-t}$.  Then, it was proven in \cite{bouman2010sampling}, that the error probability of this strategy may be bounded by:
\begin{equation}\label{eq:cl-bound}
  \epsilon^{cl}_0 \le 2\exp\left(-\delta^2\frac{mN}{N+2}\right).
\end{equation}
The above equation will be useful later.

\section{New Entropic Uncertainty Relation}

We now prove our main result. Consider the following experiment.  Let $\rho_{ABE}$ be a quantum state where the $A$ and $B$ registers each consist of $N$ qubits. Also consider two bases $Z$ and $\X$, where $\X$ is defined to be spanned by the states $\ket{x_0} = \alpha\ket{0} + \sqrt{1-\alpha^2}\ket{1}$ and $\ket{x_1} = \sqrt{1-\alpha^2}\ket{0} - \alpha\ket{1}$ where $\alpha = \sqrt{\frac{1}{2}+b}$ for some bias parameter $b \in [-.5, .5]$ (our methods can be extended to arbitrary complex amplitudes $\alpha$, however we restrict to real values for this work as the presentation is simpler and, already, this gives an interesting result as shown later in our evaluations).  Note that, when $b=0$, the $\X$ basis is the usual Hadamard basis.  When $b=\pm1/2$, the $\X$ basis is no different from the $Z$ basis.  We assume $b$ is known or can be bounded by the parties running the experiment.

Given $\rho_{ABE}$, Alice and Bob will choose a random subset $t$ of size $m < N/2$ and measure their qubits, indexed by $t$, in basis $\X$.  Let $q\in\{0,1\}^m$ be the result of XOR'ing their measurement results (i.e., $q_i = 0$ if the $i$'th measurement yielded equal outcomes in the $\X$ basis and it is $1$ otherwise).  This causes the remaining $n=N-m$ qubits to collapse to some state $\rho_{ABE}^{(t,q)}$.  Next, the remaining $n$ qubits are measured in the $Z$ basis.  Our main result is stated in  Theorem \ref{thm:main} below, provides a bound on the min entropy in this $Z$ basis measurement as a function of the bias parameter $b$, and the Shannon entropy of the observed parity value $q$.

Our proof assumes the states under investigation have a specific form on the adversary/environment system as defined below in Definition \ref{def:sym}.  This assumption is needed in only one part of our proof, though removing the assumption does seem to greatly complicate the proof.  We suspect this assumption is not actually required, though a full proof remains elusive.  That being said, the assumption below is, in a way, minimal and, in fact, most quantum states investigated in security proofs satisfy it.  Thus, while we have to make this assumption on the given quantum state, it is not very problematic towards applications, including cryptographic ones.  In fact, this assumption may even be enforced if one utilizes mismatched measurements \cite{barnett1993eavesdropping,watanabe2008tomography,matsumoto2008key,krawec2016asymptotic} (see also methods in \cite{krawec2017quantum}).

\begin{define}\label{def:sym}
  Let $\ket{\psi}_{ABE}$ be a quantum state with the $A$ and $B$ portions consisting of $N$ qubits each.  Without loss of generality, we may write $\ket{\psi}_{ABE} = \sum_{i\in \al_4^N}\alpha_i\ket{\phi_i}\ket{E_i}$, where $\ket{\phi_i}$ is the Bell basis defined earlier.  We say $\ket{\psi}_{ABE}$ is \emph{produced by a depolarizing source} if it holds that $\braket{E_i|E_j} = 0$ whenever $i \ne j$.
\end{define}
Note that a depolarizing channel produces a state according to Definition \ref{def:sym}.  A state produced by a depolarizing source also produces, in a way, ``symmetric'' (though potentially still biased based on the measurements' biases) measurement results and so can even be enforced as mentioned earlier (using, potentially, mismatched measurements if measurements are biased \cite{krawec2017quantum}).

To prove our main result, we'll need the following classical sampling strategy: Given a word $q \in \al_4^{n+m}$, choose a subset $t \subset \{1,\cdots, n+m\}$ of size $|t| = m$ uniformly at random.  The guess function is the relative number of $1$'s and $3$'s in the observed portion, namely $f(q_t) = \frac{1}{m}\num_{1,3}(q_t)$.  The target function is the relative number of $1$'s and $3$'s in the unobserved portion, $r(q_{-t}) = \frac{1}{n}\num_{1,3}(q_{-t})$.  This induces the set of ideal words:
\begin{equation}
  \mathcal{G}_t = \left\{i\in\al_4^{n+m} \st \frac{1}{m}\num_{1,3}(i_t) \dc \frac{1}{n}\num_{1,3}(i_{-t})\right\}.
\end{equation}
The classical error probability of this sampling strategy is analyzed in the following Lemma:

\begin{lemma}\label{lemma:sample-used}
  Given $\delta > 0$ and $m < n$, the classical error probability $\epsilon^{cl}$ of the sampling strategy described above is bounded by:
  \[
  \epsilon^{cl} \le 2\exp\left(-\delta^2\frac{m(n+m)}{n+m+2}\right).
  \]
\end{lemma}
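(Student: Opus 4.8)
The plan is to reduce this four-symbol sampling strategy to the binary sampling strategy already analyzed in Equation \ref{eq:cl-bound}. First I would introduce the coordinate-wise map $\pi : \al_4 \to \{0,1\}$ defined by $\pi(1) = \pi(3) = 1$ and $\pi(0) = \pi(2) = 0$, extended to $\pi : \al_4^{n+m} \to \{0,1\}^{n+m}$ by applying it to each character. The key observation is that for any word $q \in \al_4^{n+m}$ and any subset $t$ one has $\frac{1}{m}\num_{1,3}(q_t) = w(\pi(q)_t)$ and $\frac{1}{n}\num_{1,3}(q_{-t}) = w(\pi(q)_{-t})$, since the number of $1$'s and $3$'s in a block of $q$ is exactly the Hamming weight of the corresponding block of $\pi(q)$. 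Hence $q \in \mathcal{G}_t$ if and only if $\pi(q) \in \mathcal{G}_t^{(0)}$, where $\mathcal{G}_t^{(0)}$ denotes the binary ideal set of Equation \ref{eq:good-1} for strings of length $N = n+m$ with sample size $m$ (note $m < n$ gives $m < N/2$, so that strategy applies).

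Second, since both strategies draw $t$ uniformly among subsets of size $m$ and $\pi$ does not affect the choice of $t$, for every fixed $q$ the failure probability $Pr(q \notin \mathcal{G}_t)$ equals $Pr(\pi(q) \notin \mathcal{G}_t^{(0)})$, which is at most $\max_{x \in \{0,1\}^{n+m}} Pr(x \notin \mathcal{G}_t^{(0)}) = \epsilon^{cl}_0$. Taking the maximum over $q \in \al_4^{n+m}$ yields $\epsilon^{cl} \le \epsilon^{cl}_0$. Finally I would invoke Equation \ref{eq:cl-bound} with $N = n+m$, which gives $\epsilon^{cl}_0 \le 2\exp\left(-\delta^2\frac{m(n+m)}{n+m+2}\right)$, exactly the claimed bound.

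There is no serious obstacle here; the only point requiring (minor) care is to confirm that the guess and target functions of the four-symbol strategy genuinely factor through the coordinate-wise map $\pi$, and that the subset distribution is identical, so the reduction is an exact identity of bad-event probabilities rather than merely an inequality in the unhelpful direction. Once that is checked, the bound is inherited verbatim from the binary analysis of \cite{bouman2010sampling} recalled in Equation \ref{eq:cl-bound}.
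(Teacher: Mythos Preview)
Your proposal is correct and is essentially identical to the paper's own proof: the paper defines the same coordinate-wise map (written there as $q \mapsto \widetilde{q}$ with $\widetilde{q}_i = 1$ iff $q_i \in \{1,3\}$), observes the same equivalence $q \notin \mathcal{G}_t \iff \widetilde{q} \notin \widetilde{\mathcal{G}}_t$, and then bounds by $\epsilon_0^{cl}$ via Equation~\ref{eq:cl-bound} with $N = n+m$. Your remark that $m<n$ ensures $m<N/2$ is a nice explicit check the paper leaves implicit.
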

\begin{proof}
  We prove this by, essentially, reducing to the sampling strategy described at the end of Section \ref{sec:sample} and bounded by Equation \ref{eq:cl-bound}.  Let $\widetilde{\mathcal{G}}_t$ be the set of ideal words for the earlier defined sampling strategy (see Equation \ref{eq:good-1}).  Let $q \in \al_4^N$ (with $N= n+m$) and consider a fixed subset $t$ of size $m<N/2$.  Then, define the word $\tilde{q} \in \{0,1\}^N$ where $\widetilde{q}_i = 0$ if $q_i = 0$ or $2$ and $\widetilde{q}_i = 1$ otherwise.  Thus, $w(\widetilde{q_t}) = \frac{1}{m}\num_{1,3}(q_t)$ and, similarly, for the complement of $t$.  In particular, for any $t$, it holds that $q \not \in \mathcal{G}_t \iff \widetilde{q}\not\in\widetilde{\mathcal{G}}_t$. From this, we conclude:
  \[
  Pr(q \not \in \mathcal{G}_t) = Pr\left(\widetilde{q} \not \in \widetilde{\mathcal{G}}_t\right) \le \max_{i\in\{0,1\}^N} Pr\left(i \not \in \widetilde{\mathcal{G}}_t\right) \le \epsilon_0^{cl},
  \]
  where $\epsilon_0^{cl}$ was defined in Equation \ref{eq:cl-bound}.  Since the above is true for any $q$, the proof is complete.
\end{proof}

We now have all the tools we need to state and prove our main result:

\begin{theorem}\label{thm:main}
  Let $\epsilon > 0$, $\alpha=\sqrt{1/2 + b}$ for some $b\in[-.5,.5]$, and let $\ket{\psi}_{ABE}$ be a state prepared by a depolarizing source (according to Definition \ref{def:sym}) where the $A$ and $B$ registers each consist of $N$ qubits.  Assume a random subset is chosen $t$ of size $m$ and a measurement in the $\X$ basis is performed in the $A$ and $B$ registers, indexed by $t$ and resulting in outcomes $q_A, q_B\in\{0,1\}^m$.  The remaining qubits in the $A$ and $B$ portions are measured in the $Z$ basis resulting in state $\rho_{ABE}^{(t,q)}$ (which depends on $q=q_A\oplus q_B$ and $t$).  Then, except with probability $\epsilon_{fail} = (16\epsilon)^{1/3}$, it holds that:
  \begin{equation}
    Pr\left(\Hmin^{8\epsilon+3(2\epsilon)^{1/3}}(A|E)_{\rho^{(t,q)}} \ge n(1 - \hat{h}(w(q) + \nu + \delta))\right),
  \end{equation}
  where $\hat{h}(x)$ is the bounded binary entropy function and:
  \begin{equation}\label{eq:thm:delta}
    \delta = \sqrt{\frac{(m+n+2)}{m(m+n)}\ln\frac{2}{\epsilon^2}}
  \end{equation}
  and
  \begin{equation}
    \nu = 4b^2 + \frac{1}{\sqrt{m}}\ln\frac{1}{2\epsilon}
  \end{equation}
  The probability is over the choice of subset and the measurement outcome $q=q_A\oplus q_B$.
\end{theorem}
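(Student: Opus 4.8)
The plan is to combine the quantum sampling framework (Theorem~\ref{thm:sample}) with the Bell-state min-entropy bound (Lemma~\ref{lemma:Bell-ent}) and the ``conditioning'' lemma (Lemma~\ref{lemma:prob}), using the depolarizing-source assumption to control the one step where a superposition-size argument is needed. First, I would write $\ket{\psi}_{ABE} = \sum_{i\in\al_4^N}\alpha_i\ket{\phi_i}\ket{E_i}$ in the Bell basis as in Definition~\ref{def:sym}. The key observation is that the $\X$-basis parity measurement on the $A$ and $B$ halves of a Bell pair is essentially a basis measurement that distinguishes the ``parity-$0$'' Bell states $\{\ket{\phi_0},\ket{\phi_1}\}$ from the ``parity-$1$'' ones $\{\ket{\phi_2},\ket{\phi_3}\}$, up to the bias $b$: when $b=0$ this is exact, and when $b\ne0$ the measurement is a biased (non-projective-in-the-ideal-basis) version of it, which contributes the $4b^2$ term. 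So I would first pass to the basis $B$ on $\al_4^N$ given by the Bell states ordered so that the ``bad'' symbols (nonzero parity) are $\{1,3\}$ — matching exactly the sampling strategy of Lemma~\ref{lemma:sample-used} — and invoke Theorem~\ref{thm:sample} with that strategy and $\delta$ as in \eqref{eq:thm:delta} (which is just \eqref{eq:cl-bound} solved for $\delta$ at error level $\epsilon^2$). This produces an ideal state $\ket{\phi^t}$ supported on $\mathcal{G}_t$ with $\tfrac12\trd{\cdot}\le\epsilon$, and crucially, since the ideal state is obtained by projecting the real state, it inherits the depolarizing-source orthogonality $\braket{E_i|E_j}=\delta_{ij}$.

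Second, I would analyze what happens when Alice and Bob actually perform the $\X$-basis parity measurement on subset $t$. Because of the bias, this is not the same as the $B$-basis measurement used in the sampling strategy, but Theorem~\ref{thm:sample} only needs the ideal states to \emph{exist}; the actual measurement is different. I would model the $\X$-parity measurement as a channel $\mathcal{F}$ (the CPTP map of Lemma~\ref{lemma:prob}) applied to both the real state $\rho$ and the ideal state $\sigma$ (which are $\epsilon$-close), producing classical outcome $q\in\{0,1\}^m$ and post-measurement states $\rho^{(t,q)}$, $\sigma^{(t,q)}$. For the ideal branch, after observing $q$ the post-measured state is a superposition over Bell words $i$ with $\tfrac1n\num_{1,3}(i_{-t})\le w(q)+\delta$ — \emph{except} that the bias introduces additional spread. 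Here is where the depolarizing assumption is used: with $\braket{E_i|E_j}=\delta_{ij}$, the probability that the $\X$-parity outcome differs from the would-be Bell-basis parity on a given coordinate is controlled by $b$ (roughly $4b^2$ per coordinate in the relevant second-moment sense), so a Chernoff/Hoeffding-type bound gives that, except with probability $2\epsilon$, the number of such ``flips'' is at most $n(4b^2 + \tfrac{1}{\sqrt m}\ln\tfrac{1}{2\epsilon}) = n\nu$. Hence, conditioned on $q$ and on this good event, the ideal post-measured state is a superposition of Bell states $\ket{\phi_i}\ket{E_i}$ with $\tfrac1n\num_{1,3}(i)\le w(q)+\nu+\delta$, i.e.\ exactly the hypothesis of Lemma~\ref{lemma:Bell-ent} with $Q=w(q)+\nu+\delta$.

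Third, I would apply Lemma~\ref{lemma:Bell-ent} to the conditioned ideal state to get $\Hmin(A|E)_{\sigma^{(t,q)}}\ge n(1-\hat h(w(q)+\nu+\delta))$, then transfer this to the real state via Lemma~\ref{lemma:prob}: with $\tfrac12\trd{\rho-\sigma}\le\epsilon$ (so ``$\epsilon$'' in that lemma is actually $2\epsilon$ once we fold in the bias-event smoothing, giving the $4(2\epsilon)+3(2\epsilon)^{1/3}=8\epsilon+3(2\epsilon)^{1/3}$ smoothing parameter and the $2(2\epsilon)^{1/3}$ failure contribution), with probability at least $1-2(2\epsilon)^{1/3}$ over $q$ we have $\Hmin^{8\epsilon+3(2\epsilon)^{1/3}}(A|E)_{\rho^{(t,q)}}\ge\Hmin(A|E)_{\sigma^{(t,q)}}$. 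Union-bounding the sampling failure ($\sqrt{\epsilon^{cl}}\le\epsilon$, contributing to the trace-distance bound already used), the bias-flip event ($2\epsilon$), and the Lemma~\ref{lemma:prob} failure collapses into the stated $\epsilon_{fail}=(16\epsilon)^{1/3}$, and averaging over the subset choice $t$ (uniform) preserves the probabilistic statement since the bound holds for every $t$. I expect the main obstacle to be the second step — making precise, using only the depolarizing-source assumption, the claim that the biased $\X$-parity measurement spreads the support by at most $n\nu$ coordinates with high probability; this requires carefully writing the post-measurement amplitudes in the biased basis, checking that cross terms vanish by orthogonality of the $\ket{E_i}$, and then setting up the right concentration inequality for the number of parity flips. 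Everything else is assembling known lemmas and bookkeeping the smoothing/failure parameters.
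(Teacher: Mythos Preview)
Your plan is correct and matches the paper's proof almost step for step: quantum sampling in the Bell basis (Theorem~\ref{thm:sample} with Lemma~\ref{lemma:sample-used} and the stated $\delta$), a bias-correction step using the depolarizing orthogonality and a Hoeffding bound to absorb the $4b^2$ shift into $\nu$, then Lemma~\ref{lemma:Bell-ent} on the resulting state and Lemma~\ref{lemma:prob} (with $2\epsilon$) to transfer back. Two small fixes: the $\X$-parity actually separates $\{\ket{\phi_0},\ket{\phi_2}\}$ from $\{\ket{\phi_1},\ket{\phi_3}\}$ (hence $\num_{1,3}$ is the right statistic), and the flips live on the $m$ sampled coordinates, giving an $m\nu$ count and hence a $\nu$ shift in $w(q)$. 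For the step you flag as the obstacle, the paper makes your ``good event / fold in'' precise by introducing an $\X$-Bell basis, projecting the ideal state onto indices within Hamming distance $m\nu$ to form an explicit ``ideal-ideal'' state $\ket{\widetilde{\phi}^t}$, and computing $1-|\braket{\phi^t|\widetilde{\phi}^t}|^2$ (via the depolarizing orthogonality) as a binomial tail with parameter $p=4b^2$; Hoeffding plus the choice of $\nu$ then gives $\tfrac12\trd{\kb{\phi^t}-\kb{\widetilde{\phi}^t}}\le\epsilon$, and the triangle inequality supplies the $2\epsilon$ fed into Lemma~\ref{lemma:prob}.
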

\begin{proof}
  Let $\epsilon > 0$ be given and set $\delta$ as in Equation \ref{eq:thm:delta}.  From Theorem \ref{thm:sample}, and using the sampling strategy described earlier in this section and analyzed in Lemma \ref{lemma:sample-used}, there exist ideal states $\{\ket{\phi^t}_{ABE}\}$, with respect to the Bell basis, such that $\ket{\phi^t} \in \text{span}\left(\mathcal{G}_t\right)\otimes\mathcal{H}_E$ where:
  \[
  \text{span}\left(\mathcal{G}_t\right) = \text{span}\left\{\ket{\phi_q} \st \frac{1}{m}\num_{1,3}(q_t) \dc \frac{1}{n}\num_{1,3}(q_{-t})\right\}
  \]
  and:
  \[
  \trd{\sum_tP_T(t)\kb{t}\otimes(\kb{\psi} - \kb{\phi^t})} \le \sqrt{\epsilon^{cl}} \le \epsilon,
  \]
  where the latter inequality follows from Lemma \ref{lemma:sample-used} and our choice of $\delta$.
  
  Note that, since these states are constructed by projecting $\ket{\psi}$ into the subspace of ideal states, it is not difficult to see that, since $\ket{\psi}$ is produced by a depolarizing source, each $\ket{\phi^t}$ is also.  (See the discussion under Theorem \ref{thm:sample}.)

  We first analyze the ideal states and show the min entropy there is high, based on the observed $\X$ basis noise.
  
  By permuting subspaces so that those systems indexed by $t$ are the left-most system, we may write:
  \begin{equation}\label{eq:ideal-pre-m}
  \ket{\phi^t} \cong \sum_{i\in\al_4^m}\alpha_i\ket{\phi_i} \otimes \underbrace{\sum_{\ell \in J_i} \beta_{\ell|i}\ket{\phi_\ell}\ket{E_{i,j}}}_{\ket{\mu_i}},
  \end{equation}
  with:
  \[
  J_i = \left\{\ell\in\al_4^n \st \frac{1}{n}\num_{1,3}(\ell) \dc \frac{1}{m}\num_{1,3}(i)\right\}.
  \]
  Note that we are permuting subspaces only for clarity in presentation, this is not a required step of the protocol.    Now, if we were able to make a Bell basis measurement on subset $t$, observing, say, outcome $x \in \al_d^m$, we would know, for certain, that the post measured state must have collapsed to $\ket{\phi^t_x} = \sum_{y}\beta_y\ket{\phi_x}\ket{E_x}$ where the number of $1$'s and $3$'s in $y$ is $\delta$-close to the number of $1$'s and $3$'s in the observed $x$.  However, we can only measure in the $\X$ basis leading to outcomes $q_A$ and $q_B$.  The idea is that, based on $\alpha$, the observed string cannot be too different from the underlying state in the original Bell basis.  To prove this formally, we now consider the following two-qubit basis based on $\X$ (which we call the $\X$-Bell basis):
  \begin{align*}
  \ket{\phi^X_0} &= \frac{1}{\sqrt{2}}\ket{x_0,x_0} + \frac{1}{\sqrt{2}} \ket{x_1,x_1}\\
  \ket{\phi^X_1} &= \frac{1}{\sqrt{2}}\ket{x_0,x_1} + \frac{1}{\sqrt{2}} \ket{x_1,x_0}\\
  \ket{\phi^X_2} &= \frac{1}{\sqrt{2}}\ket{x_0,x_0} - \frac{1}{\sqrt{2}} \ket{x_1,x_1}\\
  \ket{\phi^X_3} &= \frac{1}{\sqrt{2}}\ket{x_0,x_1} - \frac{1}{\sqrt{2}} \ket{x_1,x_0}\\
  \end{align*}
  Note that if $\alpha = 1/\sqrt{2}$ (thus $\X$ basis is the Hadamard basis), then it holds $\ket{\phi^X_i} = \ket{\phi_i}$ for $i=0,1,2,3$.
  
  Changing basis of those systems indexed by $t$ in Equation \ref{eq:ideal-pre-m}, we have:
  \begin{align}
  \ket{\phi^t} &\cong \sum_{i\in\al_4^m}\alpha_i\left(\sum_{j\in\al_4^m}\braket{\phi_j^X|\phi_i}\ket{\phi_j^X}\right)\otimes \ket{\mu_i}\notag\\
  &= \sum_{j\in\al_4^m} \ket{\phi_j^X}\otimes\left(\sum_{i\in\al_4^m}\alpha_i\braket{\phi_j^X|\phi_i}\ket{\mu_i}\right).\label{eq:ideal-pre-m-2}
  \end{align}
  
  A measurement is now performed on the $A$ and $B$ registers, indexed by $t$, in the $\X$ basis.  However, the important factor will be the number of errors in the measurements.    Thus, we equivalently consider Alice and Bob measuring in the following two-outcome POVM: $X_0 = \kb{x_0,x_0} + \kb{x_1,x_1}$ and $X_1 = \kb{x_0,x_1} + \kb{x_1,x_0}$.  Thus, $X_1$ represents an outcome where Alice and Bob get different measurement outcomes after measuring in basis $\X$.  Note that an outcome of $X_1$ can only occur if the underlying state is $\ket{\phi^X_1}$ or $\ket{\phi^X_3}$.  Of course, if $\alpha=1/\sqrt{2}$ and $\X$ is the Hadamard basis, this gives us an exact count of the number of $1$'s and $3$'s in the state $i$ (needed to bound the entropy in $\ket{\mu_i}$).  However, we actually only count the number of $1$'s and $3$'s in $j$ - from this, we will need to determine a good bound for the number of 1's and 3's in $i$.  Intuitively, this should follow since, for $\alpha$ close to $1/\sqrt{2}$, the $\X$-Bell states are almost the Bell states and, so, any entropy equation should behave similarly in both bases for small bias parameter $b$.  We prove this rigorously below.
  
  For a fixed $j\in\al_4^m$, and user-defined $\nu \ge 0$, let's define ``good''  and ``bad'' states as follows:
  \begin{align*}
  G_j &= \{i \in \al_4^m \st \hd(i,j) \le m\nu\}\\
  B_j &= \{i \in \al_4^m \st \hd(i,j) > m\nu\}.
  \end{align*}
  Note that $\nu$ will control how likely we are to get a ``good'' state as larger $\nu$ means more states are considered good - though this will lead to additional uncertainty in $i$ as we also want to control how far $i$ is from $j$. We will show later that $\nu$ may be  made a function of $\epsilon$.   Given this, we may rewrite Equation \ref{eq:ideal-pre-m-2} as follows: $\ket{\phi^t} \cong $
  \begin{equation}\label{eq:phi-cob}
  \sum_{j\in\al_4^m} \ket{\phi_j^X}\otimes\left( \sum_{i \in G_j}\alpha_i\braket{\phi^X_j|\phi_i}\ket{\mu_i} + \sum_{i \in B_j}\alpha_i\braket{\phi^X_j|\phi_i}\ket{\mu_i}\right)
  \end{equation}
  
  Let $\ket{g_j} = \sum_{i \in G_j}\alpha_i\braket{\phi^X_j|\phi_i}\ket{\mu_i}$ and $\ket{b_j} = \sum_{i \in B_j}\alpha_i\braket{\phi^X_j|\phi_i}\ket{\mu_i}$ and so $\ket{\phi^t} \cong \sum_j\ket{\phi_j^X}\otimes\left(\ket{g_j} + \ket{b_j}\right)$.
  
  We now consider an ``ideal-ideal'' state $\ket{\widetilde{\phi}^t}$ defined as:
  \begin{equation}\label{eq:ideal-ideal}
  \ket{\widetilde{\phi}^t} = \frac{1}{\sqrt{M}}\sum_{j\in\al_4^m}\ket{\phi^X_j}\otimes\ket{g_j},
  \end{equation}
  where $M = \sum_j\bk{g_j}$.  By basic properties of trace distance, we have:
  \begin{equation}
  \frac{1}{2}\trd{\kb{\phi^t} - \kb{\widetilde{\phi}^t}} = \sqrt{1 - |\braket{\phi^t|\widetilde{\phi}^t}|^2}.
  \end{equation}
  Since all states are prepared by a depolarizing source, we have:
  \begin{align*}
  1 - |\braket{\phi^t|\widetilde{\phi}^t}|^2 &= 1 - \left|\frac{1}{\sqrt{M}}\sum_j\left(\braket{g_j|g_j} + \braket{g_j|b_j}\right)\right|^2\\
  &= 1 - \frac{1}{M}\left(\sum_j\bk{g_j}\right)^2 = 1-M.
  \end{align*}

  We claim that $1-M$ may be bounded above by an arbitrarily small value if user parameters are set appropriately.  Note that $1-M = \sum_j\bk{b_j}$.  This follows from the fact that Equation \ref{eq:phi-cob} is normalized and so:
  \[
  1 = \sum_j(\bk{g_j} + \bk{b_j}) \Longrightarrow \sum_j\bk{b_j} = 1-M.
  \]
  Now, since the state is produced by a depolarizing source, we find:
  \begin{align}
    \sum_{j\in\al_4^m}\bk{b_j} &= \sum_{j\in\al_4^m}\left(\sum_{i\in B_j}|\alpha_i|^2|\braket{\phi_j^X|\phi_i}|^2\right)\notag\\
    &=\sum_{i\in\al_4^m}|\alpha_i|^2\sum_{j \in B_i}|\braket{\phi_j^X|\phi_i}|^2.\label{eq:sum-bj}
  \end{align}

  For a fixed $i \in \al_4^m$, let's focus on $\sum_{j\in B_i}|\braket{\phi_j^X|\phi_i}|^2$.  The following identities can be easily shown for any $\alpha \in [0,1]$:
  \begin{align*}
    \ket{\phi_0^X} &= \ket{\phi_0}, &&    \ket{\phi_3^X} = \ket{\phi_3},\\
    \ket{\phi_1^X} &= \sqrt{p}\ket{\phi_2} + \sqrt{q}\ket{\phi_1},\\
     \ket{\phi_2^X} &= \sqrt{q}\ket{\phi_2} - \sqrt{p}\ket{\phi_1}
  \end{align*}
  where:
  \begin{align*}
    \sqrt{p} = \beta^2 - \alpha^2 = 2b, && \sqrt{q} = 2\alpha\beta
  \end{align*}
  From this, we see that, given a fixed $i\in\al_4^m$, and a particular $j\in B_i$, then if there exists even a single index $\ell \in \{1, 2, \cdots, m\}$ such that $i_\ell = 0$ and $j_\ell \ne 0$ or $i_\ell = 3$ and $j_\ell \ne 3$, then the entire inner product $\braket{\phi_j^X|\phi_i} = 0$.  Since we want to upper-bound Equation \ref{eq:sum-bj}, the only way that expression can have non-zero terms is if, for a given $i$, $j_\ell=0$ whenever $i_\ell=0$ and $j_\ell=3$ whenever $i_\ell = 3$.  If $i_\ell = 1$ or $2$, then $j_\ell$ may be either $1$ or $2$.  Of course, since we are summing over ``bad'' states, there must be at least $m\nu$ differences in $j$.

  Considering any fixed $i$, if $\num_{1,2}(i) \le m\nu$, it is clear that $\sum_{j\in B_i}|\braket{\phi_j^X|\phi_i}|^2 = 0$ since at least one index in each $j\in B_i$ must differ on an index where $i_\ell = 0$ or $3$.  The only time the sum over $j$ can be non-zero is if $i$ satisfies $\num_{1,2}(i) = k > m\nu$.  For any such $i$, there exists a $j \in B_i$ such that $\hd(i,j) = d$ with $m\nu < d \le k$ and where $j=i$ everywhere except on $d$ indices where $i$ happens to be $1$ ($j$ will be a $2$ on such an index) or $2$ ($j$ will be a $1$ on such an index).  This would lead to a value of $|\braket{\phi_j^X|\phi_i}|^2 = p^dq^{k-d} = p^d(1-p)^{k-d}$, where we note that $q = 1-p$.  The $p^d$ term comes from changing the $d$ indices (flipping a $1$ to a $2$ and a $2$ to a $1$) while the $q^{k-d}$ term comes from leaving the remaining $1$'s and $2$'s in $i$ the same in $j$.  Of course the rest of $i$ are $0$'s and $3$'s which are kept the same in $j$.  Since there are ${k \choose d}$ such strings $j$, it follows that for any $i$ with $\num_{1,2}(i) = k > m\nu$, that $\sum_{j\in B_i}|\braket{\phi_j^X|\phi_i}|^2 = \sum_{d=m\nu}^k{k \choose d}p^d(1-p)^{k-d}$.

Continuing this logic, we can write Equation \ref{eq:sum-bj} in the following way:
\begin{align}
  1-M &= \sum_{i\in\al_4^m}|\alpha_i|^2\sum_{j\in B_i}|\braket{\phi_j^X|\phi_i}|^2\notag\\
  &\le \sum_{k=m\nu}^m\widetilde{p}(k)\sum_{d=m\nu}^k{k\choose d}p^d(1-p)^{k-d},
\end{align}
where:
\[
\widetilde{p}(k) = \sum_{i \st \num_{1,2}(i)=k}|\alpha_i|^2.
\]
Note that, if $m\nu$ is not an integer, we take the floor value and thus the reason for the inequality above.  Note also that $\sum_{k=0}^m\widetilde{p}(k) = \sum_{i\in\al_4^m}|\alpha_i|^2 = 1$.  Thus:
\begin{align}
1-M &\le \max_{k\le m} \left(\sum_{d=m\nu}^k{k\choose d}p^d(1-p)^{k-d}\right)\notag\\
&\le \sum_{d=m\nu}^m{m\choose d}p^d(1-p)^{m-d}.
\end{align}
This can be considered the tail of the CDF of a binomial distribution with parameter $p$ and $m$ trials.  By Hoeffding's inequality, we may derive the following bound, for $\nu \ge p$:
\begin{equation}
  1-M \le \exp\left(-2m(\nu-p)^2\right).
\end{equation}
By setting $\nu = p + \frac{1}{\sqrt{m}}\ln\frac{1}{2\epsilon}$, it holds that $\sqrt{1-M} \le \epsilon$ and thus we have:
\[
\frac{1}{2}\trd{\kb{\phi^t} - \kb{\widetilde{\phi}^t}} \le \sqrt{1-M} \le \epsilon.
\]
Of course, this is true for any subset $t$ in the ideal system $\ket{\phi^t}$ and, so, by the triangle inequality, along with elementary properties of trace distance, we have:
\begin{equation}\label{eq:ideal-ideal-dist}
  \frac{1}{2}\trd{\sum_tP_T(t)\kb{t}\otimes\left(\kb{\psi} - \kb{\widetilde{\phi}^t}\right)} \le 2\epsilon.
\end{equation}
Thus, since the given input state $\ket{\psi}$ is actually $2\epsilon$ close to these ``ideal-ideal'' states, we may analyze the entropy there and use Lemma \ref{lemma:prob} to promote the analysis to the real state.

Define $\sigma_{TQ} = \sum_tP_T(t)\kb{t}\otimes\kb{\widetilde{\phi}^t}$ and we analyze the min entropy in this state, following the conclusion of the measurements and sampling.  Sampling on such a state implies measuring the subset register $T$ causing the state to collapse to $\ket{\widetilde{\phi}^t}$.  After measuring those systems indexed by $t$ in the POVM $X_0$ and $X_1$ defined above, observing $q\in \{0,1\}^m$, then tracing out the measured portion, the state collapses to $\ket{\widetilde{\phi}^t_q}$ which may be written in the form:
\begin{equation}
  \ket{\widetilde{\phi}^t} = \sum_{\substack{j\in\al_4^m \\ \num_{1,3}(j) = \num_1(q)}}p_jP\left(\sum_{i\in G_j}\beta_{i|j}\ket{\mu_i}\right).
\end{equation}
The above can be seen easily from Equation \ref{eq:ideal-ideal} and simply re-parameterizing.  Note that whenever an observation of $X_1$ is observed, the underlying index of $j$ may be either a $1$ or a $3$ and, thus, the state collapses to some $j$ where we have a bound on the number of $1$'s and $3$'s based on the observed $q$.  Let $Q = \num_1(q)$.  Continuing our derivation, we may write the above state in the following form:
\begin{align}
  \ket{\widetilde{\phi}^t} &= \sum_{\substack{j\in\al_4^m \\ \num_{1,3}(j) = Q}}p_jP\left(\sum_{i\in G_j}\beta_{i|j}\ket{\mu_i}\right)\notag\\\notag\\
  &=\sum_{\substack{j\in\al_4^m \\ \num_{1,3}(j) = Q}}p_jP\left(\sum_{\substack{i\in \al_4^m\\\hd(i,j) \le m\nu}} \beta_{i|j}\right.\notag\\
  &\times\left.\left[\sum_{\substack{\ell\in\al_4^n\\\frac{1}{n}\num_{1,3}(\ell) \dc \frac{1}{m}\num_{1,3}(i)}}\gamma_{\ell|i,j}\ket{\phi_\ell}\ket{E_{\ell|i,j}}\right]\right)\notag\\\notag\\
  &= \sum_{\substack{j\in\al_4^m \\ \num_{1,3}(j) = Q}}p_jP\left(\sum_{\substack{\ell\in\al_4^m\\\frac{1}{n}\num_{1,3}(\ell) \le w(q) + \nu + \delta}}\widetilde{\gamma}_{\ell|j}\ket{\phi_{\ell}}\ket{\widetilde{E}_{\ell|j}}\right).
\end{align}
Above, for the last equality, we simply re-parameterized and changed the order of the summation.  Note that some of the $\widetilde{\gamma}_{\ell|j}$ values may be zero.  We did this so that we can easily use Equation \ref{eq:min-ent-cl} along with Lemma \ref{lemma:Bell-ent} to find the following lower-bound:
$  \Hmin(A|E)_{\widetilde{\phi}^t_q} \ge 1 - \hat{h}(w(q) + \nu + \delta),$
where the $A$ register is used to store a $Z$ basis measurement of the first particle of each Bell pair in the above state (the second particle is traced out).

Of course, this is only the ideal state.  However, Equation \ref{eq:ideal-ideal-dist}, along with Lemma \ref{lemma:prob}, finishes the proof.  In particular, the $X$ random variable for Lemma \ref{lemma:prob} is the subset choice $t$ and measurement outcome $q$ while the CPTP map $\mathcal{F}$ is the choice of subset and the measurement in POVM $\{X_0, X_1\}$.

\end{proof}

\begin{corollary}
Let $\rho_{ABE}$ be a quantum state where the $A$ and $B$ registers hold a single qubit.  Let $\alpha = \sqrt{1/2 + b}$ for some $b \in [-.5,.5]$ and let $Q_X$ be the random variable induced by performing an $\X$ basis measurement on the $A$ and $B$ qubit and XOR'ing the outcome.  Let $Q_X^b$ be the random variable which takes the value $1$ with probability $\min(1/2, Pr(Q_X=1) + 4b^2)$.  Then it follows that:
\begin{equation}\label{eq:corr:result}
H(A_Z|E)_\rho + H\left(Q_X^b\right) \ge 1.
\end{equation}
where $A_Z$ is the random variable induced by Alice's $Z$ basis measurement on her particle in $\rho_{ABE}$.
\end{corollary}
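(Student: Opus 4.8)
The plan is to lift Theorem~\ref{thm:main} to the i.i.d.\ regime through a standard asymptotic equipartition argument. First I would purify $\rho_{ABE}$ to a pure state $\ket{\psi}_{ABE}$ on a single qubit pair (enlarging $E$ to a bigger $E'$); since enlarging Eve's register can only decrease $H(A_Z|E)$, while $\rho_{AB}$ and hence the law of $Q_X$ are untouched, it suffices to prove the bound for $\ket{\psi}$. I would then take $N$ independent copies $\ket{\psi}^{\otimes N}$ on registers $A^NB^NE^N$ and apply Theorem~\ref{thm:main} with total qubit count $N$, sample size $m=m(N)$ satisfying $m\to\infty$ and $m/N\to 0$ (say $m=\lceil\sqrt{N}\rceil$), and some fixed small $\epsilon>0$. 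One point must be acknowledged: $\ket{\psi}^{\otimes N}$ is produced by a depolarizing source exactly when $\rho_{AB}$ is Bell diagonal (equivalently, the $\ket{E_i}$ in the Bell expansion of $\ket{\psi}$ are orthogonal), so this hypothesis of Theorem~\ref{thm:main} must be read into the corollary; it is the regime discussed after Definition~\ref{def:sym} and can be enforced, while a fully general $\rho_{AB}$ would require the extension to non-depolarizing sources left open in the paper.

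The next step exploits that the copies are i.i.d.\ The $n=N-m$ unsampled pairs are in the state $\rho_{ABE}^{\otimes n}$ independently of the $m$ sampled ones, so measuring Alice's half of each unsampled pair in the $Z$ basis yields the fixed cq-state $\sigma^{\otimes n}$ with $\sigma=\rho_{A_ZE}$, \emph{regardless of the sampling outcome $(t,q)$}; the leftover $E$-systems attached to the sampled pairs are uncorrelated with this register and cannot raise its min entropy. Hence the left-hand side of Theorem~\ref{thm:main}, $\Hmin^{g}(A^n|E^n)_{\rho^{(t,q)}}$ with $g=8\epsilon+3(2\epsilon)^{1/3}$, is a deterministic number equal to $\Hmin^{g}(A_Z^n|E^n)_{\sigma^{\otimes n}}$. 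On the other side, the $m$ sampled pairs are i.i.d.\ Bernoulli trials for the event $Q_X=1$, so Hoeffding gives $w(q)\le\Pr(Q_X=1)+\delta''$ except with probability $e^{-2m\delta''^2}$; choosing $\delta''=m^{-1/4}$ makes this, together with $\epsilon_{fail}=(16\epsilon)^{1/3}$, strictly less than $1$ for all large $N$. A deterministic quantity that beats a given value with positive probability beats it with certainty, so
\[
\Hmin^{g}(A_Z^n|E^n)_{\sigma^{\otimes n}}\ \ge\ n\bigl(1-\hat{h}\bigl(\Pr(Q_X=1)+\nu+\delta+\delta''\bigr)\bigr)
\]
holds for all large $N$, where $\nu$ and $\delta$ are as in Theorem~\ref{thm:main}, so that $\nu+\delta+\delta''\to 4b^2$ as $m\to\infty$.

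Finally I would divide by $n$ and send $N\to\infty$. Because $m/N\to 0$ we get $n/N\to 1$ and, by continuity of $\hat h$, the right-hand side tends to $1-\hat{h}(\Pr(Q_X=1)+4b^2)$; by the quantum asymptotic equipartition property for cq-states, $\tfrac1n\Hmin^{g}(A_Z^n|E^n)_{\sigma^{\otimes n}}\to H(A_Z|E)_\sigma=H(A_Z|E)_\rho$ for the fixed smoothing parameter $g$. Therefore $H(A_Z|E)_\rho\ge 1-\hat{h}(\Pr(Q_X=1)+4b^2)$. It only remains to identify $\hat{h}(\Pr(Q_X=1)+4b^2)$ with $H(Q_X^b)$: if $\Pr(Q_X=1)+4b^2<1/2$ both equal $h(\Pr(Q_X=1)+4b^2)$, and otherwise both equal $1=h(1/2)$, which matches $Q_X^b$ being Bernoulli with parameter $\min(1/2,\Pr(Q_X=1)+4b^2)$. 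Rearranging yields \eqref{eq:corr:result}.

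The step I expect to be most delicate is converting the probabilistic conclusion of Theorem~\ref{thm:main} into a sure inequality; the i.i.d.\ observation above is precisely what makes this clean, since the min-entropy term on the left does not depend on the random sampling outcome. The only other thing to watch is the depolarizing-source hypothesis noted in the first paragraph, which does not transfer to an arbitrary single-copy state and so must be assumed (or enforced).
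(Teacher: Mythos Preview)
Your proposal is correct and follows essentially the same route as the paper, which proves the corollary in one line by invoking Theorem~\ref{thm:main} together with the asymptotic equipartition property and the law of large numbers. You have simply supplied the details the paper omits: the i.i.d.\ lift, the Hoeffding bound for $w(q)$, the observation that in the i.i.d.\ case the smooth min-entropy on the left of Theorem~\ref{thm:main} is independent of $(t,q)$ (which is what turns the probabilistic conclusion into a sure one), and the passage to the limit via the AEP. Your explicit flagging of the depolarizing-source hypothesis is also apt; the corollary as stated in the paper does not list it, but Theorem~\ref{thm:main} requires it, so it must be read in.
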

\begin{proof}
This follows immediately from Theorem \ref{thm:main} and by the asymptotic equipartition property \cite{tomamichel2009fully} and the law of large numbers.
\end{proof}

\subsection{Comparison to Standard Entropic Uncertainty in the Asymptotic Limit}

In the next section, we apply our new entropic uncertainty bound to two particular cryptographic applications, each of which were proven in previous work, using standard entropic uncertainty relations for quantum min entropy and we compare the resulting bit generation rates for various bias parameters and noise levels in the channel.  However, before this, we show here a comparison in the asymptotic case to the following standard entropic uncertainty inequality proven in \cite{berta2010uncertainty} (written in a form, here, for the particular scenario and measurements we're interested in):
\begin{equation}\label{eq:asym-other}
H(A_Z|E) + H(A_X|B_X) \ge -\log_2\left(\frac{1}{2}+b\right),
\end{equation}
for $b \ge 0$.  Such a comparison gives a general notion of the improvement that is possible using our new result, since the asymptotic case will always provide an upper-bound.

For this comparison, we assume the state is produced by a depolarization channel (which is easily confirmed to satisfy Definition \ref{def:sym}), and thus have $H(A_X|B_X) = h(q)$, where $q$ will denote the error rate in the channel.  Comparing with Equation \ref{eq:corr:result}, of course when $b = 0$, the two identities agree exactly, as expected.

The comparison for $b \ge 0$ is shown in Figure \ref{fig:comp}.  There are several interesting observations to make here; in particular, we note that, in many settings, our entropic uncertainty relation produces a strictly better bound on the entropy.  However, this is not always the case.  In particular, when the noise and bias are both small, standard entropic uncertainty produces a better result.  However, in all other tests we performed when the noise is larger and there is bias, our result produces a strictly better bound on the entropy.  Since both our new result and standard results are both lower-bounds, one may, in practice, simply take the maximum of the two and, thus, our work can only benefit future analyses requiring bounds on quantum entropy with biased measurements.

\begin{figure}
    \centering
    \includegraphics[width=.48\linewidth]{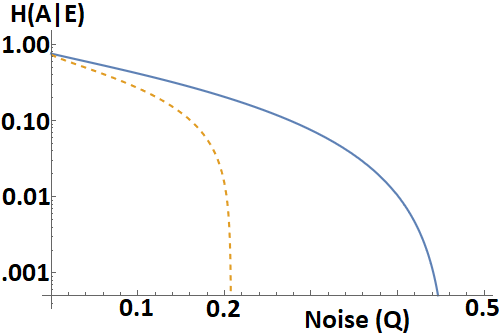}
    \includegraphics[width=.48\linewidth]{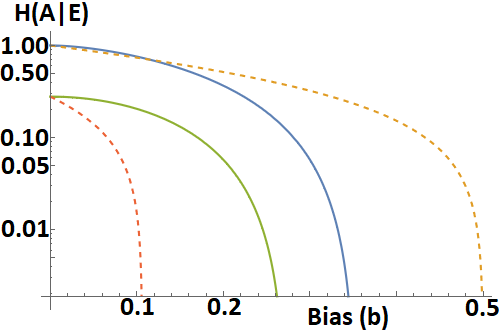}
    \caption{Comparing our new entropic uncertainty relation (Solid lines, Equation \ref{eq:corr:result}) to standard entropic uncertainty relations in the asymptotic limit (Dashed lines, Equation \ref{eq:asym-other}).  Since these are lower-bounds, higher is better here.  Left: Here we fix the bias at $b=.1$ and vary the noise parameter $q$ ($x$-axis) from $0$ to $50\%$.  Right: Here, we fix the noise at $0\%$ (Blue and Yellow) and $20\%$ (Green and Red) as the bias ($x$-axis) varies from $b=0$ to $b=0.5$. Note that our new result produces the same or better results in most settings.  However, when there is no noise, our result tends to perform worse, except for a certain range of bias $b < .2$ as shown in the Right figure (Blue and Yellow comparison).  See text for additional discussion.}
    \label{fig:comp}
\end{figure}


\section{Applications}

We now apply our main theorem to two different cryptographic applications.  The first is a quantum random number generator (QRNG) with a faulty and uncharacterized source.  The second is a QKD protocol where Alice and Bob are not able to measure in mutually unbiased bases, as is typically required by BB84 style protocols to maximize key generation rates.  In both instances we show there are several cases where our new result significantly outperforms prior work using standard entropic uncertainty relations.

\textbf{Quantum Random Number Generation: } We first consider a \emph{source independent} (SI) QRNG protocol whereby the measurement devices are fully characterized, but the source is unknown, as introduced in \cite{vallone2014quantum}.  The goal of a QRNG protocol is to distill a cryptographically secure random bit string from a quantum source.  SI security models offer a nice ``middle ground'' between fully trusted devices (which have weak security guarantees) and fully device independent models, which offer strong security guarantees \cite{colbeck2011private,pironio2013security} but have low bit generation rates with today's technology \cite{bierhorst2018experimentally,liu2018high}. SI-QRNG protocols have been demonstrated experimentally to have high bit generation rates reaching in the Gbps range \cite{avesani1801secure,drahi2020certified}.  For a general survey of QRNG protocols, the reader is referred to \cite{herrero2017quantum}.

Typically SI-QRNG protocols operate by having the uncharacterized source prepare quantum signals and sending them to a user.  The user measures some of the signals in one basis to determine the fidelity of the signal.  The remaining signals are measured in an alternative basis leading to a \emph{raw random string}.  The raw random string may not be truly uniform random and so needs to be further processed through privacy amplification.  If one can bound the quantum min entropy of the raw random string, Equation \ref{eq:PA} may be used to determine the number of bits that may be extracted from the source, even if the source happens to be adversarial.

We analyze the SI-QRNG protocol introduced in \cite{xu2016experimental}.  In this protocol, the source should prepare $N$ copies of the Bell state $\ket{\phi_0}$ and send both particles to Alice.  Alice chooses a random subset and measures both particles in the $\X$ basis (denoting by $q$ as the outcome of the parity of these measurements; namely $q_i=0$ if on the $i$'th test, Alice observed the same outcome, either $\ket{x_0}$ or $\ket{x_1}$, in both particles). For the remaining Bell pairs, Alice measures the first particle in the $Z$ basis, discarding the second particle.  Let $\alpha = \sqrt{\frac{1}{2}+b}$ with $b \ge 0$ (the case when $b < 0$ turns out to be symmetric with the equations we use).  Using a standard entropic uncertainty relation from \cite{tomamichel2011uncertainty}, the authors of \cite{xu2016experimental} were able to derive the following bound on the bit generation rate:
\begin{equation}\label{eq:QRNG-other}
r_{other} = \frac{1}{N}\left(-n\log\left(\frac{1}{2}+b\right) - n\log_2\gamma(w(q)+\delta')\right),
\end{equation}
where:
\[
\gamma(x) = \left(x + \sqrt{1+x^2}\right)\left(\frac{x}{\sqrt{1+x^2}-1}\right)^x,
\]
and
\[
\delta' = 2\sqrt{\frac{N^2}{n^2m}\ln\frac{4}{\epsilon'}}.
\]
Of course, the original work in \cite{xu2016experimental} only considered the case when $b=0$, however since their proof relies on the standard entropic uncertainty relation from \cite{tomamichel2011uncertainty}, it is  not difficult to see it can be applied to any $b$.

Using our Theorem \ref{thm:main}, along with Equation \ref{eq:PA}, we can, instead, derive the following bit generation rate:
\begin{equation}\label{eq:QRNG-ours}
r_{ours} = \frac{1}{N}\left(n(1 - \hat{h}(w(q) + \nu + \delta)) + 2\log\frac{1}{\epsilon}\right).
\end{equation}

In our evaluations, we set a sampling size of $7\%$ (thus $m = 0.07N$) and we set $\epsilon' = 10^{-12}$ (for $r_{other}$) and $\epsilon = 10^{-36}$ (for $r_{ours}$).  This implies a failure probability and a security level on the order of $10^{-12}$ for both equations to make a fair comparison.  Note that in our bound, we require a much smaller $\epsilon$ to guarantee the same level of security as other work - this is a disadvantage to our approach caused by the use of Lemma \ref{lemma:prob}.  However, we will see that even with this disadvantage, our result still produces higher rates in many scenarios.

Figures \ref{fig:QRNG:1} and \ref{fig:QRNG:2} compare the bit generation rates of this protocol using our new result (solid lines) and prior work using standard entropic uncertainty (dashed lines).  We note several things.  First, our new bound produces higher bit generation rates in many of the tested scenarios.  There are times, however, when prior work surpasses ours - in particular when the noise is low, however this was also observed in the previous section.  We conjecture that our methods may be improved in the low noise case, however we leave that as interesting future work.  Regardless, our work provides substantially improved results in many cases and, since these are all lower bounds, users of these protocols with biased measurements may simply take the max of both our work and prior work to derive the actual bit generation rate.

\begin{figure}
    \centering
    \includegraphics[width=.48\linewidth]{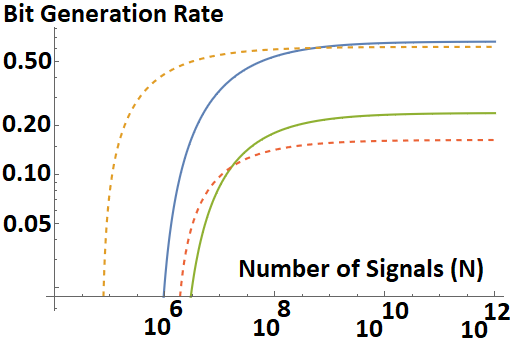}
    \includegraphics[width=.48\linewidth]{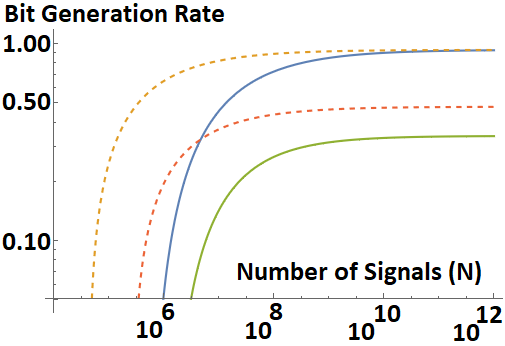}
    \caption{Evaluating and comparing the QRNG bit generation rates with biased measurements using our new result (Solid lines, Equation \ref{eq:QRNG-ours}) and prior work using standard entropic uncertainty (Dashed lines, Equation \ref{eq:QRNG-other}) as the number of signals $N$ (the $x$-axes) increases.  Left: Assuming $5\%$ noise (thus, $w(q) = .05$), Right: Assuming no noise ($w(q) = 0$). Blue: Our new result with no bias; Yellow: prior work with no bias; Green: Our new result with $b = 0.2$; Red: Prior work with $b = 0.2$.  We note that when there is some noise, our result clearly produces higher bit generation rates in all comparable cases as long as the number of signals is large enough.  When there is no noise, our result produces lower rates.  As the number of signals increases, our result converges to prior work when $b=0$, but produces worse results when $b = 0.2$ in the no noise case (right); however in the noisy case (left), our result surpasses prior work as the number of signals increases.  See text for more discussion.}
    \label{fig:QRNG:1}
\end{figure}

\begin{figure}
    \centering
    \includegraphics[width=.48\linewidth]{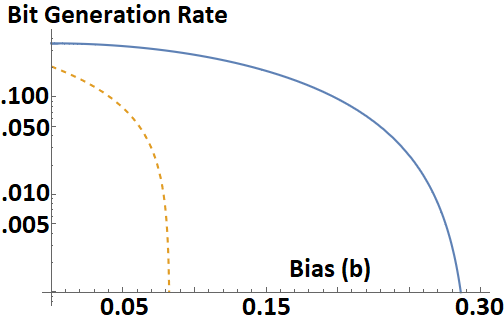}
    \includegraphics[width=.48\linewidth]{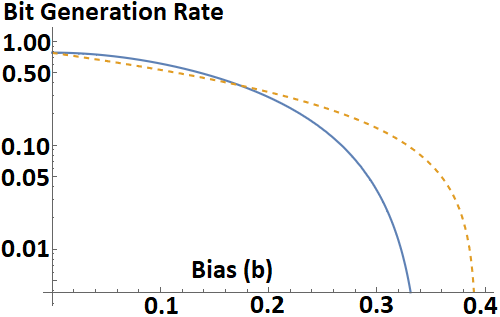}
    \caption{Evaluating and comparing the QRNG bit generation rates with biased measurements using our new result (Solid lines, Equation \ref{eq:QRNG-ours}) and prior work using standard entropic uncertainty (Dashed lines, Equation \ref{eq:QRNG-other}) as the bias parameter $b$ (the $x$-axes) increases. We fix $N = 10^{10}$ for these evaluations.  Left: Assuming a high level of noise at $15\%$ (thus, $w(q) = .15$), Right: Assuming a low level of noise at $2\%$ ($w(q) = 0.02$). Blue: Our new result; Yellow: prior work. Here, again, we see that at high noise our new rate produces substantially higher bit generation rates and has a higher tolerance to biased measurements, whereas at lower levels of noise, standard entropic uncertainty produces a better result in most cases (except for a low level of bias $b < .2$).}
    \label{fig:QRNG:2}
\end{figure}


\textbf{Quantum Key Distribution: } Next, we consider QKD.  Here, we derive a key-rate expression for standard BB84 \cite{QKD-BB84} where, however, instead of using the $Z$ and Hadamard bases as usual, Alice and Bob measure in either the $Z$ or the $\X$ basis.  Equivalently, Alice sends states in either the $Z$ or $\X$ basis while Bob measures in either basis.  Using results from \cite{tomamichel2012tight}, which depend on standard entropic uncertainty relations, the following key-rate for this protocol was derived:
\begin{equation}\label{eq:QKD-old}
r_{old} = \frac{1}{N}\left(n(c - h(w(q) + \mu)) - \leakEC - \log_2\frac{2}{\hat{\epsilon}^2}\right),
\end{equation}
where $\leakEC$ is the amount of information leaked during error correction, $c = -\log_2\left(\frac{1}{2}+b\right)$ and:
\begin{equation}
\mu = \sqrt{\frac{N(m+1)}{nm^2}\ln\frac{2}{\hat{\epsilon}}}.
\end{equation}
The above equations were derived using an entropic uncertainty relation from \cite{tomamichel2011uncertainty}.

On the other hand, our new relation in Theorem \ref{thm:main} can be used to immediately find the following key-rate for the protocol:
\begin{equation}\label{eq:QKD-new}
r_{new} = \frac{1}{N}\left(n(1 - \hat{h}(w(q) + \nu + \delta)) - \leakEC - \log_2\frac{1}{\epsilon}\right).
\end{equation}
Note we are ignoring an additional leakage of $\log\frac{1}{\epsilon_{cor}}$ \emph{in both key-rate expressions} caused by a final correctness check - however such a leakage would apply equally to both key-rate expressions and, since we are only interested in a direct comparison, this (small) leakage will not affect the results presented here.

As in the QRNG analysis, we set the failure probability and security level of both analysis methods to $10^{-12}$ which means setting $\hat{\epsilon} = 10^{-12}$ (for $r_{old}$) and setting $\epsilon = 10^{-36}$ for our new result ($r_{new}$).  We also use a sampling rate of $7\%$ (so $m = 0.07N$).  Finally, we use $\leakEC = 1.2h(w(q) + \delta)$ for our new work and $\leakEC = 1.2h(w(q) + \mu)$ for previous work ($r_{old}$); note that $\delta$ is usually larger than $\mu$ so this is actually to the advantage of prior work (as is setting $\epsilon$ so small to produce the same failure rate as prior work - this will actually benefit prior work in our comparison).  Despite this, our new result shows significant improvement over prior work in several, though not all, settings as shown in Figures \ref{fig:QKD1} and \ref{fig:QKD2}.  As shown in Figure \ref{fig:QKD1}, in the no noise and no bias case, prior work surpasses our work.  However, as the bias increases, our new bound surpasses prior work.  Figure \ref{fig:QKD2} again shows previous trends in that our bound is best when there is both bias and significant noise.  

\begin{figure}
    \centering
    \includegraphics[width=.48\linewidth]{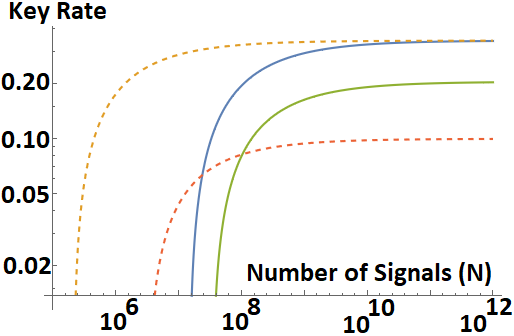}
    \includegraphics[width=.48\linewidth]{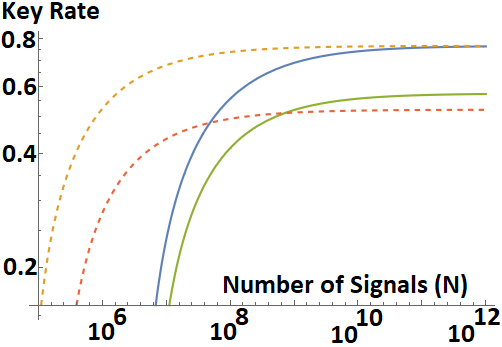}
    \caption{Evaluating and comparing the QKD key generation rates with biased measurements using our new result (Solid lines, Equation \ref{eq:QKD-new}) and prior work using standard entropic uncertainty (Dashed lines, Equation \ref{eq:QKD-old}) as the number of signals $N$ (the $x$-axes) increases.  Left: Assuming $5\%$ noise (thus, $w(q) = .05$), Right: Assuming $1\%$ noise ($w(q) = 0.01$). Blue Solid: Our new result with $b = 0$; Yellow Dashed: prior work with $b=0$; Green Solid: Our result with $b=0.1$; Red Dashed: Prior work with $b = 0.1$.}
    \label{fig:QKD1}
\end{figure}

\begin{figure}
    \centering
    \includegraphics[width=.48\linewidth]{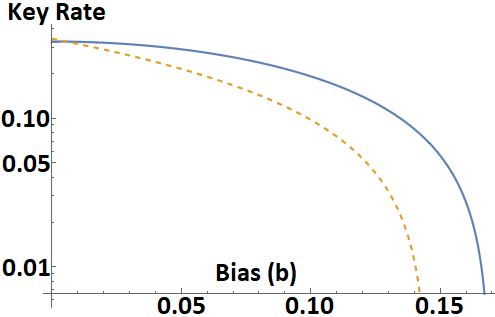}
    \includegraphics[width=.48\linewidth]{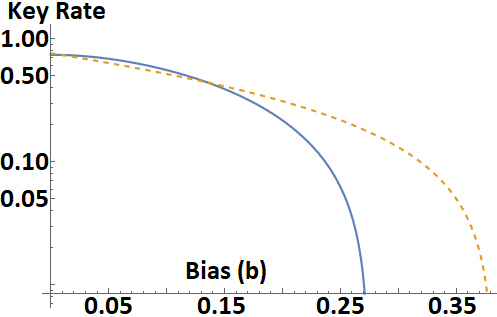}
    \caption{Evaluating and comparing the QKD key generation rates with biased measurements using our new result (Solid lines, Equation \ref{eq:QKD-new}) and prior work using standard entropic uncertainty (Dashed lines, Equation \ref{eq:QKD-old}) as the bias parameter $b$ (the $x$-axes) increases. We fix $N = 10^{10}$ for these evaluations.  Left: Assuming $5\%$ noise (thus, $w(q) = .05$), Right: Assuming a lower level of noise at $1\%$ ($w(q) = 0.01$). Blue solid: Our new result; Yellow dashed: prior work.}
    \label{fig:QKD2}
\end{figure}

\section{Closing Remarks}

In this work, we derived a new entropic uncertainty relation for biased measurements.  We applied our result to QRNG and QKD protocols and compared to prior work.  We also compared our relation in the asymptotic scenario to standard entropic uncertainty relations.  Our evaluations and comparisons showed that there are several cases where our new relation surpassed prior work, sometimes substantially so.  Our result seems best when there is both noise in the channel and bias in the measurements.  When the noise is very low or non-existent, prior work produced better results.  However, since our result, along with prior work, all produce lower-bounds on the min-entropy, users may simply evaluate both and take the maximum.  

Many interesting open questions remain.  Most important would be to remove the need for Definition \ref{def:sym}.  We suspect our method does not actually need this assumption on the state. It is only used in one part of the proof, to more easily bound the trace distance of two particular states, and we suspect other methods may be used for this.  Nonetheless, even with this assumption, our result is still highly practical to quantum cryptography.  Other open questions include extending this work to higher dimensions beyond qubits, and dealing with other imperfect measurements beyond bias.  We also did not compare to the generalized entropic uncertainty relation of \cite{tomamichel2013link} which may produce similar, or better results under biased settings.  Regardless, our result is specific to this particular instance of biased measurements and provides a bound that is easy to compute, whereas the generalized result of \cite{tomamichel2013link}, though very powerful and applicable to any scenario, requires an often difficult optimization to provide a good bound on the min entropy.  Our bound also handles all sampling effects automatically, which is an added benefit of the sampling-based approach used here.  We leave a further comparison between these two results as interesting future work.

\textbf{Acknowledgments:} The author would like to acknowledge support from the NSF under grant number 2143644.

\balance

\end{document}